\theoremstyle{plain}
  \newtheorem{theorem}{Theorem}[section]
  \newtheorem{corollary}[theorem]{Corollary}
  \newtheorem{lemma}[theorem]{Lemma}
\theoremstyle{definition}
\theoremstyle{remark}
  \newcounter{assumptioncounter}
  \theoremstyle{changebreak}
 \newtheorem{assumption}[assumptioncounter]{Assumption}
\numberwithin{equation}{section}
\DeclareMathOperator{\Tr}{Tr}
\newcommand\otimesal{\mathop{\hbox{\raise 1.6 ex
  \hbox{$\scriptscriptstyle\mathrm{al}$}
\kern -0.92 em \hbox{$\otimes$}}}}
\newcommand\oplusal{\mathop{\hbox{\raise 1.6 ex
  \hbox{$\scriptscriptstyle\mathrm{al}$}
\kern -0.92 em \hbox{$\oplus$}}}}
\newcommand\Gammal{\hbox{\raise 1.7 ex
\hbox{$\scriptscriptstyle\mathrm{al}$}\kern -0.50 em $\Gamma$}}
\renewcommand\i{\mathrm{i}}
\let\al=\alpha \let\be=\beta \let\de=\delta \let\ep=\epsilon
  \let\ga=\gamma 
\let\ka=\kappa \let\la=\lambda \let\om=\omega 
\let\si=\sigma
 \let\Ga=\Gamma  \let\Om=\Omega
\newcommand{\caB}{{\mathcal B}}
\newcommand{\caD}{{\mathcal D}}
\newcommand{\caE}{{\mathcal E}}
\newcommand{\caF}{{\mathcal F}}
\newcommand{\caG}{{\mathcal G}}
\newcommand{\caH}{{\mathcal H}}
\newcommand{\caI}{{\mathcal I}}
\newcommand{\caJ}{{\mathcal J}}
\newcommand{\caK}{{\mathcal K}}
\newcommand{\caT}{{\mathcal T}}
\newcommand{\caV}{{\mathcal V}}
\newcommand{\caW}{{\mathcal W}}
\newcommand{\caX}{{\mathcal X}}
\newcommand{\caY}{{\mathcal Y}}
\newcommand{\caZ}{{\mathcal Z}}
\newcommand{\bbC}{{\mathbb C}}
\newcommand{\bbE}{{\mathbb E}}
\newcommand{\bbN}{{\mathbb N}}
\newcommand{\bbP}{{\mathbb P}}
\newcommand{\bbR}{{\mathbb R}}
\newcommand{\opunit}{\text{1}\kern-0.22em\text{l}}
\newcommand{\frh}{{\mathfrak h}}
\newcommand{\frA}{{\mathfrak A}}
\newcommand{\frU}{{\mathfrak U}}
\newcommand{\e}{{\mathrm e}}
\newcommand{\iu}{{\mathrm i}}
\renewcommand{\d}{{\mathrm d}}
\newcommand{\res}{{\mathrm R}}
\renewcommand{\sp}{\mathrm{sp}}
\newcommand{\beq}{ \begin{equation} }
\newcommand{\eeq}{ \end{equation} }
\newcommand{\bet}{ \begin{theorem} }
\newcommand{\eet}{ \end{theorem} }
\newcommand{\Symm}{\mathrm{Sym}}
\newcommand{\dsi}{\si}
\newcommand{\dsig}{\Omega}
\newcommand{\basi}{\tilde{\si}}
\newcommand{\basig}{\tilde{\Omega}}
 \newcounter{smallarabics}
\newenvironment{arabicenumerate}
{\begin{list}{{\normalfont\textrm{\arabic{smallarabics})}}}
  {\usecounter{smallarabics}\setlength{\itemindent}{0cm}
  \setlength{\leftmargin}{5ex}\setlength{\labelwidth}{4ex}
  \setlength{\topsep}{0.75\parsep}\setlength{\partopsep}{0ex}
   \setlength{\itemsep}{0ex}}}
{\end{list}}
\newcounter{smallroman}
\newcommand{\ben}{\begin{arabicenumerate}}
\newcommand{\een}{\end{arabicenumerate}}
\newcommand{\Pair}{\mathrm{Pair}}
\newcommand{\sfock}{\Ga_{\mathrm{s}}}
\newcommand{\fock}{\Ga}
\newcommand{\norm}{ \|}
\newcommand{\str}{ |}
\newcommand{\lakl}{\lambda^{-2} }
\newcommand{\embed}{ \caI_{\uparrow} }
\newcommand{\compress}{  \caI_{\downarrow }}
\begin{document}

\begin{center}
\large{ \bf{ Large deviation generating function for energy transport in the Pauli-Fierz model}} \\
\vspace{15pt} \normalsize

{\bf Wojciech De Roeck}\footnote{Postdoctoral Fellow FWO-Flanders,
email: {\tt
 wojciech.deroeck@fys.kuleuven.be}}\\
Insitute for Theoretical Physics, K.U.Leuven, Belgium\\
Insitute for Theoretical Physics, ETH Zurich, Switzerland
\\

\end{center}

\vspace{20pt} \footnotesize \noindent {\bf Abstract: } We consider
a finite quantum system coupled to quasifree thermal reservoirs at
different temperatures. Under the assumptions of small coupling
and exponential decay of the reservoir correlation function, the
large deviation generating function of energy transport into the reservoirs is shown to be analytic on a
bounded set. Our method is different from the spectral deformation
technique which was employed recently in the study of
spin-boson-like models. As a corollary, we derive the
Gallavotti-Cohen fluctuation relation for the entropy production and a central limit theorem for energy transport.

\vspace{5pt} \footnotesize \noindent {\bf KEY WORDS:} , Gallavotti-Cohen symmetry, nonequilibrium statistical
mechanics, spin-boson model \vspace{20pt}
\normalsize
\section{Introduction}

\subsection{Fluctuations in open quantum systems}

Recently, the physics community has shown quite some interest in current fluctuations in nonequilibrium quantum systems.
We mention two interesting perspectives:
\ben
\item{ Since the work of \cite{evanscohen, gallavotticohen95prl}, it has become clear that nonequilibrium systems, both classical and quantum, exhibit a symmetry in the fluctuations of entropy production.
This symmetry, dubbed the  ``Gallavotti-Cohen Fluctuation Theorem" holds  far for equilibrium.}
\item{It has been realized \cite{lesovikexcess} that noise between electron contacts shows distinct signs of Fermi statistics,
studies of this kind go by the name of "Full counting statistics".}
\een

Perhaps the most important promise of fluctuation theory is in the construction of nonequilibrium statistical mechanics: Via the study of the large deviation rate function, one hopes to find a useful variational principle describing nonequilibrium stationary states. Recent papers taking part in this project are e.g.\ \cite{bertinidesolelatticegases,maesnetocnyminent,deriddareview}.

In this paper, we study heat current fluctuations in a nonequilibrium model of the type  'spin-boson'.  
We prove that the large deviation generating function corresponding to  energy transport exists in a bounded (but arbitrarily large) set around $0$ and that it is analytic.  

\subsection{Large deviation generating function}

We briefly sketch the framework of large deviations.

Assume that we have a family of  $\bbR^d$-valued random
variables $A_{t}$, indexed by time $t \in \bbR^+$ and with distribution given by the expectation
$\bbE_{ t  }$.  To fix thoughts, one can think of the $A_t$  as time-integrals of some variable $ a(s),  0\leq s \leq t$, i.e.\ 
\beq
A_t :=  \int_0^t \d s \,  a(s)
\eeq
The large deviation generating function on $\bbR^d$ (if it exists) is defined as \beq
\label{def: intro generating function} f(\ka):= \lim_{t \uparrow
\infty}\frac{1}{t}\log \bbE_{ t  }( \e^{-(\ka \str
A_t)}) \eeq where $(\cdot \str \cdot)$ is the canonical scalar
product on $\bbR^d$. 

From the function $f(\ka)$, one can extract large time properties of the observables $A_t$, as sketched below, see
\cite{dembozeitouni} for precise statements and details.
\ben
\item{ If $f$ is analytic in a neighbourhood of $0$, then $A_t$ satisfies a central limit theorem with mean $- \frac{\partial f}{\partial \ka } (0)$ and covariance $ \si=  \frac{\partial^2 f}{\partial \ka^2 }(0)$ (the gradient and the Hessian of $f$). 
Let \beq b_t:= \frac{1}{\sqrt{t}} \Big( A_t+ t \frac{\partial f}{\partial \ka } (0) \Big) ,\eeq then
\beq 
 \lim_{t \uparrow \infty} \bbE_{t} (\e^{- \i (\ga \str b_t)    }    )=\e^{-(\ga  \str  \si  \ga) },  \qquad \ga \in \bbR^d
 \eeq
 }
 \item{ If $f$ is differentiable on $\bbR^d$, then the family $\frac{A_t}{t}$ satisfies a \emph{large deviation principle} with rate function $I(\al)$ given by 
 \beq
 I(\al):= - \inf_{\ka \in \bbR^d}  ( (\ka \str \al) + f(\ka)) 
 \eeq
  Heuristically, this means that 
 \beq \mathrm{Prob}_t(\frac{A_t}{t}\approx \alpha) \sim   \e^{-t I(\alpha)},\qquad \alpha \in \bbR^d, t \uparrow
\infty \eeq
(in a logarithmic sense) as $t \uparrow \infty$.  
    }
\een

In classical statistical mechanics, the existence  of the large deviation generating function can usually be established through a convexity argument, see e.g.\  \cite{simon}. A similar general understanding is lacking in quantum
statistical mechanics (see however \cite{aboesalem, hiaimosonytomohiro, reybelletld, netocnyredig} for partial results). Another -even conceptual- problem in quantum statistical mechanics, is how to describe joint large deviations of several noncommuting variables. Remark that it was exactly to solve such a conceptual problem for the central limit theorem, that the framework of the \emph{fluctuation algebra} was constructed \cite{goderisverbeurevets}.

We consider a quantum setup where $A_t$  corresponds to the total  heat transport into reservoirs. Hence the setup is somewhat different from that in \cite{aboesalem, hiaimosonytomohiro, reybelletld, netocnyredig}; the expectation  $\bbE_{}(g(A_t))$ for some function $g$ can not be formulated as an expectation of some observable in a quantum state, rather it is the probability of obtaining certain (differences of) measurement outcomes. The problem of joint distributions for non-commuting observables does not even appear in this context since the different reservoir Hamiltonians do mutually commute.
This is discussed  more extensively in \cite{derezinskideroeckmaes}.
Our result will establish the existence and analyticity of  $f(\ka)$ on a compact (but arbitrarily large) set containing $0$. Hence, we do not prove the large deviation principle (but we do prove the central limit theorem). The Gallavotti-Cohen fluctuation theorem is a simple corollary of our result.

\subsection{Open quantum systems with finite reservoirs}\label{sec: finite}

Our model describes a small quantum system (an atom) interacting with a quantum system with many degrees of freedom (a reservoir). We choose the reservoir as simple as possible: a free field of bosons, although fermions would do just as well\footnote{In fact, they would simplify the technical work}. The system is coupled to the reservoirs through a term, which is linear in the field creation and annihilation operators. This type of models are known as Pauli-Fierz models, or, in the simplest case, the spin-boson model. These models arise as toy-models in solid state physics, were the bosons are lattice phonons, or through the dipole approximation in QED, where the bosons are photons, see \cite{derezinski1} for more background.

 To make the statements mathematically sharp, we consider this field in the thermodynamic limit, or equivalently, in the limit where the modes form a continuum.  However, for the sake of distilling the right physical question addressed in this paper, we start from a finite-volume setup.

\subsubsection{Setup}\label{sec: finite setup}
Fix a finite-dimensional Hilbert space $\caE$ with
self-adjoint Hamiltonian $E$ and let $\caK$ be a finite set
which indexes the heat reservoirs at inverse temperatures
$\be_{k\in \caK}
>0$.  The superscript $n \in
\bbN$ indicates that the thermodynamic limit ($n \nearrow \infty$) has not yet been
taken. See also Section \ref{sec: notation} for specific notation and conventions.
 To each $k \in \caK$, we associate 
 
\ben
\item{A finite-dimensional one-particle Hilbert space $\frh_{k,n}$ and its bosonical second quantization $\sfock (\frh_{k,n})$. }
\item{The coupling operator $V_{k,n} \in \caB(\caE, \caE \otimes \frh_{k,n})$. }

\item{A self-adjoint one-particle Hamiltonian $h_{k,n}$ acting on $\frh_{k,n}$ with corresponding second quantization $\d \Ga (h_{k,n}) $.
}
\item{A Gibbs state $ \rho_{k,\be_k,n}$ on $\caB(\sfock (\frh_{k,n}))$ at inverse temperature $\be_k$
\beq \rho_{k,\be_k,n}\left[ R  \right] =
\frac{\Tr \left[ \e^{-\be_k  \d \Ga (h_{k,n})} R
\right]}{\Tr \left[ \e^{-\be_k  \d \Ga (h_{k,n})} 
\right]}, \qquad    R \in  \caB(\sfock (\frh_{k,n}))  \eeq }
\een
We define the total interacting Hamiltonian on $\caE
\otimes_{k \in \caK} \sfock (\frh_{k,n}) $ as \begin{eqnarray}
H_{\la,n} &=& E + \sum_{k \in \caK} \d \Ga (h_{k,n})+ \la\sum_{k
\in \caK}   \left( a^*( V_{k,n}) +  a( V_{k,n}) \right). 
\end{eqnarray} 
 We take as
initial state \beq \label{def: initial state finite} \rho_{\caE}
\otimes \rho_{\res_n }, \qquad \rho_{\res_n } := \mathop{\otimes}\limits_{k\in \caK}
 \rho_{k,\be_k,n} \eeq 
corresponding to initially decorrelated reservoirs and an arbitary state $\rho_\caE$ on $\caB(\caE)$.

\subsubsection{Transport fluctuations and their limits}  \label{sec: transport}
 We introduced the finite volume systems in order to pick the right expression for transport fluctuations, and hence, now that all tools are in place, we ask what we mean by transport fluctuations in the finite-volume models. 
 
Note that the reservoir Hamiltonians $ \d \Ga (h_{k,n})$  mutually
commute and that they have discrete spectrum. Hence one can measure them simultaneously in the beginning and at the end of an experiment. To determine the transport (of energy), we look at the differences of those measurement values. 
Let $\caT := \prod_{k \in \caK} \sp ( \d \Ga (h_{k,n})) $ and let $P_{x\in \caT}$ be the joint spectral projections of $\d \Ga (h_{k,n})$ corresponding to the eigenvalues 
$x=(x_k)_{k \in \caK}$.
The standard interpretation of quantum mechanics yields the probabilities
\beq \label{def: prob}
\bbP_{\rho_{\caE},t, \la,n} (y ) :=   \sum_{ x,x' \in \caT, x'-x=y} \rho_{\caE} \otimes \rho_{\res_n} \big[  P_x    \e^{-\i t H_{n,\la }}     P_{x'}        \e^{\i t H_{n,\la }}  P_x   \big]
\eeq
for observing energy differences $y \in \bbR^{|\caK|}$.
The Fourier-Laplace transform of this measure has a nice expression which is better suited for taking the thermodynamic limit:  Using that (the density matrix corresponding to) $\rho_{\res_n}$ commutes with the spectral projections $P_x$, one arrives at 
\beq \label{eq: F finite}
 \int_{\bbR^{|\caK|}} \d y \, \bbP_{\rho_{\caE},t, \la,n} ( y ) \e^{-(\ka \str y)} 
= \rho_{\caE} \otimes \rho_{\res_n} \big[  \Ga (w_{-\ka,n}) e^{\i t H_{\la,n} }   \Ga (w_{\ka,n}) \e^{-\i t H_{\la,n} }   \big]
\eeq
where \beq w_{\ka,n}= (\mathop{\oplus}\limits_{k \in \caK} \e^{-\ka_k h_{k,n}}) \eeq
We will study the infinite-volume limit of this expression, given by \eqref{def: functions analytic cont}  and introduced in Section \ref{sec: positive temp}. In Section \ref{sec: discussion} we will substantiate the claim that  \eqref{def: functions analytic cont}  is indeed the $n \uparrow \infty$-limit of   \eqref{eq: F finite}.

This approach to fluctuations was already used in \cite{kurchanquantum, deroeckmaes, jarzynskiquantum, talknerlutzhanggi}
 for fluctuations of heat and work, and, most widespread, in
\cite{lesovikexcess,levitovjmp} for fluctuations of charge
transport (``Full counting statistics''), made mathematically transparant in \cite{klich,avronbachmann}.

\subsection{Conventions and Notation}\label{sec: notation}

For $\caE$ a Hilbert space, we use the standard notation for $1\leq p <\infty $
\beq \caB_p(\caE):= \{  S \in \caB(\caE), \Tr\left[(S^*S)^{p/2}\right] < \infty  \}\eeq
and
\beq
\norm S \norm_p := (\Tr\left[(S^*S)^{p/2}\right])^{1/p}
\eeq

For a Hilbert space $\frh$ we write  
\beq
\sfock^n(\frh):= \Symm_n \otimes^n \frh , \qquad \sfock(\frh):= \mathop{\oplus}\limits_{n \in \bbN} \sfock^n(\frh)
\eeq
where $\Symm_n$ projects on the fully symmetrized subspace and $\sfock(\frh)$ is the bosonic Fock space built on $\frh$.  For operators $C$ on $\frh$, we write (whenever the RHS is well-defined as an operator on $\sfock(\frh)$)
\beq
  \fock(C)=  \mathop{\bigoplus}\limits_{n \in \bbN} \otimes^n C   
\eeq
\beq
C \mapsto  \d \fock(C)= \mathop{\bigoplus}\limits_{n \in \bbN} \sum_{i =1}^n  1\otimes \ldots  1 \otimes  \mathop{C} \limits_{\textrm{$i$'th position} } \otimes 1  \ldots \otimes 1
\eeq
For $W \in \caB(\caE,\caE \otimes \frh)$, we use
 the generalized creation and annihilation operators 
$a(W)/a(W^*)$  on $\caE \otimes \sfock(\frh)$,  (see \cite{derezinski1} for  an extensive review of this notation).
If, for some $\psi \in \frh$ and $D \in \caB(\caE)$,
\beq
W u = D u \otimes \psi, \qquad u \in \caE
\eeq
then $a^*(W)= D\otimes a^*(\psi)$ where $a^*(\psi)$ is the more familiar creation operator 
. 

For a Hilbert space ${\frh}$, we write  $\overline{\frh}$ for its conjugate space, which is fixed by an antiunitary map ${\frh} \to \overline{\frh}:  a \mapsto \bar{a}$. 
If $\frh=L^2(\caX,\bbC)$ for some measure space $\caX$,  the map $a \mapsto \bar{a}$ is identified with the complex conjugation on functions $\caX \to \bbC$.
If $R \in \caB(\frh)$, then $\overline{R} \in \caB(\overline{\frh})$ is defined by $\overline{R}\overline{a}=\overline{Ra}$.

For $\ka \in \bbC^d$, we write 
\beq
\Re \ka = (\Re \ka_1,\ldots, \Re \ka_d), \qquad  \Im \ka = (\Im \ka_1,\ldots, \Im \ka_d)
\eeq

For indicator functions, we use the notation $\mathrm{Ind}(\cdot)$, i.e.\ for a premise $\al(x)$ dependent on some variable $x$
\beq
\mathrm{Ind}(\al(x)) = \left\{ \begin{array}{ccl} 1   &\textrm{if}  & \al(x) \, \textrm{is true}   \\  0     &\textrm{if} &   \al(x) \, \textrm{is false} \end{array} \right.
\eeq

\subsection{Outline}

We introduce the model in abstract terms in Sections \ref{sec: zerotemp} and \ref{sec: positive temp}, immediately followed by the result in Section \ref{sec: result}. 
The physical justification of this model is given in Section \ref{sec: thermodynamic limit}, where it is explained how it emerges from the quantities discussed in Sections \ref{sec: finite setup} and \ref{sec: transport}.
In Section \ref{sec: comparison}, we discuss related results in the literature.  The rest of the paper is devoted to the proofs. Section \ref{sec: proof of main theorem} contains the main line of reasoning and the technical lemma's are postponed to Section \ref{sec: estimates}. 
The main idea is in Lemma \ref{lem: spectral deformation} whose main ingredient is Lemma \ref{lem: bound caW}.

\section{Model and results}\label{sec: model}

\subsection{Zero-temperature objects} \label{sec: zerotemp}
Introduce a finite-dimensional space $\caE$ with a self-adjoint Hamiltonian $E$ and  (for each $k \in \caK$) one-particle spaces ${\frh}_{k,\infty} $ with a self-adjoint operator $h_{k,\infty} $  on  ${\frh}_{k,\infty} $. We also need coupling operators ${V}_{k,\infty} \in \caB(\caE,\caE\otimes \frh_{k,\infty})$.

One should think of these objects as defining the zero-temperature Hamiltonian of the subsystem+reservoir system, formally
\beq
H_{\la,\infty}:= E +  \mathop{\sum}\limits_{k \in \caK} \d \Ga (h_{k,\infty} )  +  \la \mathop{\sum}\limits_{k \in \caK} \left(a(V_{k,\infty} )+a^*(V_{k,\infty} )\right)
\eeq
The heavy notation with the subscript $\infty$ is because in what follows, more natural infinite-volume objects are introduced. The  objects with subscript $\infty$ are relevant at $\be=\infty$.

We  anticipate the finite-temperature by introducing the Bose-density operators
\beq 
{\zeta}_{k,\infty} :=  (\e^{\be_k {h}_{k,\infty}}-1  )^{-1}
 \eeq 

We will use the above notation to build a dynamical system which represents our system at positive temperature. The connection between the finite-volume objects, introduced in Section \ref{sec: finite}, and the inifinite-volume model, is given in Section \ref{sec: discussion} 

\subsection{Positive temperatures}\label{sec: positive temp}
Define
\beq 
\left.
\begin{array}{rclrcl}
 \frh_k &:=& {\frh}_{k,\infty} \oplus \overline{{\frh}_{k,\infty} }  &    \frh&:=& \mathop{\oplus}\limits_{k \in \caK} \frh_k
  \\ 
h_k&:=&   {h}_{k,\infty}  \oplus      ( -{\overline{h_{k,\infty} }}) &    h&:=&   \mathop{\oplus}\limits_{k \in \caK} h_k \\
V_k&:=&   \sqrt{1+  {\zeta}_{k,\infty} }{V}_{k,\infty} \oplus \overline{ \sqrt{ {\zeta}_{k,\infty} }{V}_{k,\infty} } &  V&:=&  \mathop{\oplus}\limits_{k \in \caK} V_k  \\
w_{k,\ka_k} &:=&  \e^{-\ka_k h }  &    w_\ka &:=&  \mathop{\oplus}\limits_{k \in \caK} w_{k,\ka_k} 
 \end{array}
 \right.
 \eeq

Let the total Hilbert space be $\caH:=\caE \otimes \sfock(\frh)$ and define on $\caE \otimes \big(\caD(\d \Ga (h)) \cap\caD( a(V)+a^*(V) )  \big)$, \beq  \label{def: ham}
H_\la:= E + \d \Ga (h)+ \la \left( a( V   )+ a^*(V) \right)
\eeq
The following theorem comes from \cite{derezinskideroeck2}
\bet \label{thm: def ham}
Assume that $\norm V \norm <\infty$. Let $H_0:= E+ \d \Ga (h)$ and denote 
\beq
I(u):=\e^{\i u H_0} (a(V)+a^*(V))      \e^{- \i u H_0}.
\eeq
The series
\beq
U_t^\la := \e^{\i t H_0}  \sum_{n\in \bbN} \mathop{\int}\limits_{0\leq u_1\ldots  u_n \leq t}  \d u_1\ldots  \d u_n \,   I(u_n) \ldots  I(u_1),
\eeq
originally defined on the dense subspace $\caD_1$ (see Section \ref{sec: construction dynamics}),
extends to a strongly continuous unitary group on $\caH$, which will also be denoted $U_t^\la $.
Its self-adjoint generator  is an extension of $H_\la$ as in \eqref{def: ham} and will be simply called $H_\la$ in what follows.
\eet

Let $\rho_{\caE}$ be a state on $\caB(\caE)$ and let $\rho_\res$ be the state on $\caB(\sfock(\frh))$ given by 
\beq \rho_\res[\cdot ]=   \langle \Omega ,\, \cdot\,  \Omega \rangle  \eeq where $\Omega=1\oplus 0\oplus0\ldots \in \sfock(\frh)$ is the vacuum vector.  We will take 
$
\rho_\caE \otimes \rho_\res$ as initial state on $\caB(\caH)$ for our dynamics. Unless otherwise stated, we assume $\rho_\caE$ to be arbitrary.

We now introduce our main object of study
 \begin{assumption}[Bounded interaction]\label{ass: laplace} For all $\ka$ with $\Re \ka  \in D $,
 \beq   \norm w_{\frac{\ka}{2}} V \norm < \infty, \qquad   \norm w_{-\frac{\overline{\ka}}{2}} V \norm < \infty \eeq
    \end{assumption}
    The following lemma follows from Section \ref{sec: construction dynamics}.
    \begin{lemma}\label{lem: definition W}
 Assume there is an open set $D \subset \bbR^{|\caK|}$ with $0 \in D$ such that Assumption \eqref{ass: laplace} is satisfied and let $U_t^\la$ be as in Theorem \ref{thm: def ham}.  Than the function \beq \label{def: functions analytic cont}    \ka \mapsto
 \rho_{\caE} \otimes \rho_\res \left[
\Ga (w_{-\ka}) U_{-t}^{\la} \Ga (w_{\ka}) U_{t}^{\la} \right] 
 \eeq  has an analytical continuation from
 $\{ \Re \ka =0\}$ into $   \{ \Re \ka  \in D\} $.
\end{lemma}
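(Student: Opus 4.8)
The plan is to establish analyticity by showing that the expression in \eqref{def: functions analytic cont} admits a convergent Dyson-type series expansion whose terms are individually analytic in $\ka$ on the tube $\{\Re\ka\in D\}$, with the convergence uniform on compact subsets. First I would write out $U^\la_{\pm t}$ using the expansion from Theorem~\ref{thm: def ham}, so that the quantity becomes a sum over $n,m$ of multiple time integrals of expressions of the form
\beq
\rho_{\caE}\otimes\rho_\res\Big[\Ga(w_{-\ka})\,\e^{\i u_n H_0} (a(V)+a^*(V))\e^{-\i u_n H_0}\cdots \Ga(w_\ka)\cdots\Big].
\eeq
The key algebraic move is to push the factors $\Ga(w_{-\ka})$ and $\Ga(w_\ka)$ through the free evolutions $\e^{\pm\i u H_0}$ (they commute, since $w_\ka$ is a function of $h$ and $\Ga(w_\ka)$ commutes with $\d\Ga(h)$ and with $E$) and then through the creation/annihilation operators, using the intertwining relations $\Ga(w_\ka)a^*(V) = a^*(w_\ka V)\Ga(w_\ka)$ and $\Ga(w_\ka)a(V)=a(w_{\bar\ka}^{-1}V)\Ga(w_\ka)$ in the appropriate form. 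After moving all the $\Ga(w)$-factors to one side where they hit the vacuum $\Omega$ (giving $\Ga(w_\ka)\Omega=\Omega$) or are absorbed into $\rho_\res$, each term becomes a vacuum expectation of a product of $a^\#$ operators whose test functions are of the form $w_{\pm\ka/2}V$ or similar — precisely the combinations controlled by Assumption~\ref{ass: laplace}. Splitting $\ka$ symmetrically as $\ka/2 + \ka/2$ across the $U_{-t}$ and $U_t$ factors is what makes each individual creation/annihilation operator carry a factor bounded by $\norm{w_{\pm\ka/2}V}\,$ or $\norm{w_{-\bar\ka/2}V}$.

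Next I would bound a generic term. Using the canonical-commutation-relation Wick expansion for the vacuum expectation, together with the fact that for a pairing of $2N$ operators ($N=(n+m)/2$) each contracted factor is bounded in terms of the reservoir two-point function, the modulus of the $n$-$m$ term is dominated by $C^{n+m}(\norm{w_{\ka/2}V}+\norm{w_{-\bar\ka/2}V})^{n+m}$ times the volume of the time simplex $\frac{t^n}{n!}\frac{t^m}{m!}$ times a combinatorial pairing factor $(n+m)!!$. These combine to give a series bounded by something like $\sum_{n,m}\frac{(Ct)^{n+m}}{n!\,m!}(n+m)!!\,(\cdots)^{n+m}$, which — while not obviously convergent termwise with the naive bound — is exactly where the exponential decay of the reservoir correlation function must be invoked (as flagged in the abstract and in Lemma~\ref{lem: bound caW}): integrating the correlation function over the time simplex produces an extra $1/n!$-type gain that tames the factorial from the Wick pairings, yielding a finite radius of convergence in $t$, or at least uniform-on-compacts convergence after resummation. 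Given uniform convergence of a series of analytic functions of $\ka$ (analyticity of each term being clear since $\ka\mapsto w_{\pm\ka/2}V$ is an analytic $\caB(\caE,\caE\otimes\frh)$-valued map on the tube, by Assumption~\ref{ass: laplace} and a Cauchy-estimate/dominated-convergence argument), Weierstrass's theorem gives analyticity of the limit on $\{\Re\ka\in D\}$, and agreement with \eqref{def: functions analytic cont} on $\{\Re\ka=0\}$ is automatic since there the manipulations are between bona fide unitaries and bounded operators.

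The main obstacle I anticipate is precisely the control of the Dyson series: the number of Wick pairings grows like $(n+m)!!$, so a crude operator-norm bound on each creation/annihilation operator does not close, and one genuinely needs the exponential (or at least integrable) decay of the reservoir correlation function $\langle\Omega, a(w V)\e^{\i uH_0}a^*(w V)\Omega\rangle$ in $u$ to extract enough decay along the time simplex to beat the combinatorics — this is the content the paper isolates in Lemma~\ref{lem: bound caW} and Lemma~\ref{lem: spectral deformation}. A secondary technical point is justifying the operator manipulations (pushing $\Ga(w_\ka)$ through the series) on the correct dense domain $\caD_1$ and checking that the unbounded operator $\Ga(w_\ka)$, which is only densely defined for $\Re\ka\neq 0$, interacts well with the limit defining $U^\la_t$; I would handle this by first doing everything with the finite-volume ($n<\infty$) objects of Section~\ref{sec: finite}, where all operators are bounded and \eqref{eq: F finite} is an identity of matrices, and then passing to the limit using the uniform bounds just described, so that the analytic continuation is obtained as a limit of analytic functions rather than by differentiating under ill-defined expressions.
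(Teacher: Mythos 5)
Your overall strategy is the right one and is essentially what the paper does: express $\Ga(w_{\frac{\ka}{2}})U^\la_{\pm t}\Ga(w_{-\frac{\ka}{2}})$ as the modified Dyson series \eqref{def: series kappa} in which the interaction picture operator carries $w_{\frac{\ka}{2}}V$ and $w_{-\frac{\overline{\ka}}{2}}V$ in place of $V$, observe that each term is an analytic function of $\ka$ whenever Assumption~\ref{ass: laplace} holds, and conclude by Weierstrass from uniform-on-compacts convergence.

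However, the ``main obstacle'' you identify does not exist for this particular lemma, and the way you propose to get around it would actually weaken the statement. You claim that the Wick-pairing factor $(n+m)!!$ cannot be beaten without the exponential decay of the reservoir two-point function (Assumption~\ref{ass: correlation}). This is a miscount. Set $N=n+m$. The pairing count $(N-1)!!$ divided by the combined simplex volume $t^n/n!\cdot t^m/m!$ (summed over $n+m=N$ this is $(2t)^N/N!$) gives
\beq
\frac{(N-1)!!}{N!}=\frac{1}{2^{N/2}(N/2)!},
\eeq
so the $N$-th term is bounded by $\bigl(C|\la| t\bigr)^N/(2^{N/2}(N/2)!)$ with $C$ a multiple of $\norm w_{\frac{\ka}{2}}V\norm + \norm w_{-\frac{\overline{\ka}}{2}}V\norm$, and the series converges absolutely for \emph{every} fixed $t$, with no decay assumption at all. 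Equivalently, one can avoid Wick pairings entirely and argue on operator norms directly: starting from the vacuum, after $n$ applications of $a^\#(\cdot)$ one is in the $\le n$-particle sector, the norm picks up at most $\sqrt{n!}\,(2C)^n$, and this is again crushed by the $1/n!$ from the simplex. This is why the lemma is stated under Assumption~\ref{ass: laplace} only; note that Assumption~\ref{ass: correlation} does not appear in its hypotheses. The paper packages this observation as the statement (end of Section~\ref{sec: construction dynamics}) that the modified Dyson series defines a strongly continuous group on the dense domain $\caD_1$, by the same argument as for $\ka=0$ in \cite{derezinskideroeck2}, since that argument uses only $\norm V\norm<\infty$ (here with $V$ replaced by $w_{\pm\ka/2}V$). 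The exponential decay of Assumption~\ref{ass: correlation} is genuinely needed only for the uniform-in-$t$ control that underlies Lemma~\ref{lem: bound caW} and the $t\uparrow\infty$ limit in Theorem~\ref{thm: main}, not for analyticity at fixed $t$.

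Your secondary suggestion — regularizing via the finite-volume objects of Section~\ref{sec: finite} and passing to the limit — would work as an alternative route to domain issues, but the paper avoids it: it works directly in infinite volume on the domain $\caD_1$, on which both the vacuum and the iterated action of $a^\#(w_{\pm\ka/2}V)$ are under control. The finite-volume systems play only a motivational role in the paper (Section~\ref{sec: thermodynamic limit}), not a technical one.
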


The function \eqref{def: functions analytic cont} should be thought of as the Fourier-Laplace transform of the probability distribution of energy transport. This is discussed and justified in Section \ref{sec: discussion}.

  \subsection{Results} \label{sec: result}

To continue, we need additional assumptions.  The next assumption basically establishes that the operator $h$ on $\frh$ has absolutely continuous spectrum.
\begin{assumption}\label{ass: reservoir ham} 
There are measure spaces $(\caX,\d x)$ and $(\caY,\d y)$ such that $\frh= L^2(\caX,\d x)$,  $\caX=\caY\times \bbR$ and $\d x= \d y \d \xi$ where $\d \xi$ is the Lesbegue measure on $\bbR$. For $  (y,\xi) =x \in \caX$, we write $\xi(x)=\xi $ for the projection on $\bbR$. 
 The operator $h$ acts by multiplication with $ \xi(x)$,
\beq (h \psi)(x)= \xi(x)\psi(x),\qquad \psi \in \frh \eeq
\end{assumption}
%
Remark that one can associate to $V$  a measurable function $\caX  \to \caB(\caE) $, which we denote $x \mapsto V(x)$ and which satisfies
\beq   \langle v \otimes \psi,  V u \rangle_{\caE \otimes \frh} =  \int_{\caX} \d x  \,   \overline{\psi(x)}  \langle v, V(x) u \rangle_{\caE}  , \qquad u,v \in \caE, \psi \in \frh  \label{def: function V} \eeq
Define the reservoir time-correlation function
 \beq \label{def: correlation} p_\ka(t) := \sup_{S \in \caB(\caE), \| S\|=1 } \|V^* S \e^{-\i t h }   w_{\ka}V \|   \eeq
  \begin{assumption}[Decay of bath correlations] \label{ass: correlation} 
  Let $p_{\ka}$ be as defined above. There are $C, \al > 0$ such that for $\ka \in D$
  \beq 
   p_\ka (t) \leq  C \e^{-\al \str t \str }, \qquad  t \in \bbR
\eeq
    \end{assumption}
    
    Introduce the set of Bohr frequencies  $ \caF:= \sp E -\sp E  $ and let $1_{\caE_{e}}$ stand for the spectral projection of $E$ on $e \in \sp E$.
   
The following assumption expresses that the coupling between system and reservoir is sufficiently effective. 

\begin{assumption}[Fermi Golden Rule]\label{ass: fermi golden rule} 
\mbox{}
\begin{enumerate}
\item For all $\om \in \caF$ and $\d y$-almost all  $y \in \caY$, the function $V: \caX \mapsto \caB(\caE)$ is continuous  on the set $\{ x=(y, \xi) \str \, \xi =\om \}$. This implies  that 
$V(x=(y,\om))$ is well-defined.
 \item If  $S \in \caB(\caE)$ satisfies
 \beq
 \sum_{\om \in \caF } \sum_{\footnotesize{\left.\begin{array}{cc} e,e' \in
\sp E, \\ \om=e-e'\end{array}\right.} }  \int_{\caY} \d y  \left\norm \big[ S,  1_{\caE_{e}}  V(y,\om) 1_{\caE_{e'}} \big] \right\norm =0 \eeq
then $S=c1$ for some $c \in \bbC$.
\end{enumerate}
\end{assumption}

Now comes our main theorem

\bet\label{thm: main}
Assume Assumptions \ref{ass: laplace}, \ref{ass: reservoir ham}, \ref{ass: correlation} and  \ref{ass: fermi golden rule}. There is a $\la_0>0$ such that for $\la \in [-\la_0,\la_{0}] $ and $\ka \in D$,
 \beq
f(\ka,\la):= \lim_{t \uparrow \infty}  t^{-1} \log  \rho_{\caE} \otimes \rho_\res \left[
\Ga (w_{-\ka}) U_{-t}^{\la} \Ga (w_{\ka}) U_{t}^{\la} \right] 
 \eeq
exists, is independent of $\rho_\caE$  and  real-analytic in $\ka $ and $\la$.
\eet

As the main corrolary, we state the Gallavotti-Cohen fluctuation theorem for the entropy production. This requires an additional assumption
\begin{assumption}[Time-reversal invariance]\label{ass: time-reversal}
There is a anti-unitary $\Theta $ on $\caH$ such that  for all $\la \in \bbR$,
\beq
 \Theta^{-1} H_{\la}  \Theta= H_{\la} ,\qquad    \Theta^{-1}  \d \fock (h_{k})   \Theta=  \d \fock (h_{k}) , \qquad  \Theta (\caE\otimes\Om)= (\caE\otimes\Om)
\eeq
and $\Theta$ is an involution, i.e.\ $\Theta^{-1}=\Theta$.
\end{assumption}

\bet\label{thm: gc}
Assume \ref{ass: laplace}, \ref{ass: reservoir ham},  \ref{ass: correlation}, \ref{ass: fermi golden rule} and \ref{ass: time-reversal}. Let $f(\ka,\la)$ be as in Theorem \ref{thm: main} and define for $\nu \in \bbR$
\beq
\ka(\nu)=  \nu  (\be_1, \be_2 ,\ldots, \be_{\str \caK \str} )   \in \bbR^{\str \caK \str}
\eeq
For $\nu \in \bbR$ such that $\ka(\nu), \ka(1-\nu) \in D$,
 \beq\label{eq: gc}
f(\ka(\nu),\la)= f(\ka(1-\nu),\la)
 \eeq
\eet


By Bochner's theorem, there is a nonnegative Borel measure
$\d \bbP_{\rho_\caE,t,\la}$ on $\bbR^{|\caK|}$ such that
 \beq \rho_\caE \otimes
\rho_\res  \left[
\Ga (w_{-\ka}) U_{-t}^{\la} \Ga (w_{\ka}) U_{t}^{\la} \right] 
 =
\int_{\bbR^{|\caK|}} \d \bbP_{\rho_\caE,t,\la}(y)  \e^{-(\ka \str y) }
 \eeq
for $ {\Re \ka \in D}$. Putting $\ka=0$, one sees that
$\d \bbP_{\rho_\caE,t,\la}$ is a probability measure.  It is the infinite-volume analogue of the probabilities $\bbP_{\rho_\caE,t,\la,n}$ introduced in Section \ref{sec: transport}. 
We  write $\bbE_{\rho_\caE,t,\la}[\cdot]$ for the expectation w.r.t.\  $\d \bbP_{\rho_\caE,t,\la}$. 

The $\bbR^d$-valued random variable $y=(y_k)$ is interpreted as the energy transport into the distinct reservoirs.
Remark that in thermodynamics, one interpretes $S:=  \sum_{k \in \caK} \be_k y_k $ as the  entropy production.

Since 
\beq
f(\ka(\nu),\la) =\lim_{t \uparrow \infty} \frac{1}{t} \log \bbE_{\rho_\caE,t,\la}[ \e^{-\nu  S }]
\eeq
one sees that $f(\ka(\nu),\la) $  is indeed related to (large) fluctuations of the entropy production.

The following corollary follows from Theorem \ref{thm: main} by \cite{bryc}. 
\begin{corollary}
Assume the assumptions of Theorem \ref{thm: main}. Then the $\bbR^d$-valued random variable $y$ satisfies a central limit theorem with mean $-\frac{\partial}{ \partial \ka} f(\ka,\la)\big\str_{\ka=0} $ and covariance $ \si_\la:=\frac{\partial^2}{ \partial \ka^2} f(\ka,\la)\big\str_{\ka=0}$. Let
\beq
b_t :=    \frac{1}{\sqrt{t}} \left(y + t \frac{\partial}{ \partial \ka} f(\ka,\la)\big\str_{\ka=0}  \right),
\eeq
then
\beq
\bbE_{\rho_\caE,t,\la} [ \e^{-\i (\ga \str b_t)}  ] \mathop{ \longrightarrow}\limits_{t \uparrow \infty}  \e^{-(\ga \str \si_\la \ga)}, \qquad  \ga \in \bbR^{\str\caK\str}
\eeq
\end{corollary}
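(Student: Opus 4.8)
The statement is an instance of the abstract central limit theorem of Bryc \cite{bryc}; the plan is to check its hypotheses and then carry out the attendant Taylor expansion, the only real input being a quantitative refinement of Theorem \ref{thm: main} that its proof already supplies. Write
\[
\psi_t(\ka):=\tfrac{1}{t}\log\bbE_{\rho_\caE,t,\la}\bigl[\e^{-(\ka\str y)}\bigr]
\]
for the finite-time cumulant generating function. First I would record, from Lemma \ref{lem: definition W} together with Bochner's theorem (already invoked above), that for every $\ka$ with $\Re\ka\in D$ the integral $\int\e^{-(\ka\str y)}\,\d\bbP_{\rho_\caE,t,\la}(y)$ converges absolutely and coincides with $\rho_\caE\otimes\rho_\res\bigl[\Ga(w_{-\ka})U_{-t}^\la\Ga(w_\ka)U_t^\la\bigr]$; since $0\in D$, the tube $\{\Re\ka\in D\}$ contains the whole imaginary axis, so the characteristic function $\ga\mapsto\bbE_{\rho_\caE,t,\la}[\e^{-\i(\ga\str y)}]$ ($\ga\in\bbR^{\str\caK\str}$) is the restriction of a function holomorphic on a fixed complex neighbourhood $\caN$ of $0$. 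On a small enough polydisc about $0$ the latter is nonvanishing (it equals $1$ at $\ka=0$), so there $\bbE_{\rho_\caE,t,\la}[\e^{-(\ka\str y)}]=\e^{t\psi_t(\ka)}$ with $\psi_t$ holomorphic and $\psi_t(0)=0$.

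The key point is that the bare conclusion of Theorem \ref{thm: main} --- pointwise convergence $\psi_t\to f(\cdot,\la)$ --- is \emph{not} sufficient: the recentering of $b_t$ is by $t\frac{\partial f}{\partial\ka}(0,\la)$, not by $\bbE_{\rho_\caE,t,\la}[y]$, so to pass the $\sqrt t$-rescaling one needs a \emph{rate}. The proof of Theorem \ref{thm: main} provides it: it yields a holomorphic prefactor $g_{\rho_\caE}(\ka,\la)$, nonvanishing for $\ka$ near $0$ (with $g_{\rho_\caE}(0,\la)=1$), such that $\bbE_{\rho_\caE,t,\la}[\e^{-(\ka\str y)}]=\e^{tf(\ka,\la)}\bigl(g_{\rho_\caE}(\ka,\la)+o(1)\bigr)$ uniformly for $\ka$ near $0$. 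Hence $\psi_t(\ka)=f(\ka,\la)+O(1/t)$ uniformly on a polydisc, and by Cauchy's estimates every $\ka$-derivative of $\psi_t$ converges to the corresponding derivative of $f(\cdot,\la)$ with error $O(1/t)$; in particular $\nabla_\ka\psi_t(0)=\frac{\partial f}{\partial\ka}(0,\la)+O(1/t)$, $\frac{\partial^2}{\partial\ka^2}\psi_t(0)=\si_\la+O(1/t)$, and the third derivatives of $\psi_t$ at $0$ are bounded uniformly in $t$.

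With these estimates the computation is short. Fix $\ga\in\bbR^{\str\caK\str}$; for $t$ large $\i\ga/\sqrt t\in\caN$, and from $b_t=\tfrac{1}{\sqrt t}\bigl(y+t\frac{\partial f}{\partial\ka}(0,\la)\bigr)$ and $\bbE_{\rho_\caE,t,\la}[\e^{-(\ka\str y)}]=\e^{t\psi_t(\ka)}$ one gets
\beq
\bbE_{\rho_\caE,t,\la}\bigl[\e^{-\i(\ga\str b_t)}\bigr]
=\exp\Bigl(-\i\sqrt t\,\bigl(\ga\str\tfrac{\partial f}{\partial\ka}(0,\la)\bigr)+t\,\psi_t(\i\ga/\sqrt t)\Bigr).
\eeq
Expanding $t\,\psi_t(\i\ga/\sqrt t)$ to second order about $\ka=0$ and using $\psi_t(0)=0$, the linear contribution is $\i\sqrt t\,\bigl(\ga\str\nabla_\ka\psi_t(0)\bigr)$, the quadratic contribution converges as $t\uparrow\infty$ to the Gaussian exponent built from $\si_\la$, and the cubic remainder is $O(t^{-1/2})$ by the uniform bound on the third derivative of $\psi_t$. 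Collecting the two $\sqrt t$-terms gives $\i\sqrt t\,\bigl(\ga\str\nabla_\ka\psi_t(0)-\frac{\partial f}{\partial\ka}(0,\la)\bigr)=\sqrt t\cdot O(1/t)\to 0$ --- this is exactly where the $O(1/t)$ rate enters --- while the quadratic term tends to its limit; letting $t\uparrow\infty$ therefore yields $\bbE_{\rho_\caE,t,\la}[\e^{-\i(\ga\str b_t)}]\to\e^{-(\ga\str\si_\la\ga)}$, which is the claim. (That $\si_\la\geq0$, needed for this to be a genuine Gaussian, follows from convexity in $\ka$ of $\tfrac{1}{t}\log\bbE_{\rho_\caE,t,\la}[\e^{-(\ka\str y)}]$.)

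The sole genuine obstacle is the quantitative strengthening invoked in the second paragraph: upgrading ``the limit $f(\ka,\la)$ exists'' to the asymptotic equality $\bbE_{\rho_\caE,t,\la}[\e^{-(\ka\str y)}]=\e^{tf(\ka,\la)}(g_{\rho_\caE}(\ka,\la)+o(1))$ with a holomorphic prefactor. This is precisely the content of the resonance/spectral-deformation analysis behind Theorem \ref{thm: main} --- a simple leading eigenvalue governing the exponential rate, separated from the rest of the relevant spectrum by a gap, together with an analytic rank-one residue that produces $g_{\rho_\caE}$ --- so nothing beyond that proof is required; the analytic continuation of Lemma \ref{lem: definition W} and the Cauchy-estimate/Taylor bookkeeping are routine.
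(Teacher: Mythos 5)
Your argument is correct and is essentially the paper's: the paper dispatches the corollary with a one-line citation to Bryc's theorem, and what you have written is a careful unfolding of that citation. The one place where you add genuine content beyond what the bare citation supplies is the centering issue, and you have identified it accurately: Bryc's theorem, in its usual formulation, centers by the finite-time mean $\bbE_{\rho_\caE,t,\la}[y]=-t\,\nabla_\ka\psi_t(0)$, whereas the corollary centers by $-t\,\nabla_\ka f(0,\la)$, so one needs $\sqrt t\bigl(\nabla_\ka\psi_t(0)-\nabla_\ka f(0,\la)\bigr)\to 0$, which pointwise (or even locally uniform) convergence of $\psi_t\to f$ does not give for free. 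Your observation that the spectral-gap structure in the proof of Theorem \ref{thm: main} (equation \eqref{eq: discretized thm}) produces exactly the quantitative asymptotics $\bbE_{\rho_\caE,t,\la}[\e^{-(\ka\str y)}]=\e^{t f(\ka,\la)}\bigl(g_{\rho_\caE}(\ka,\la)+O(\e^{-c t})\bigr)$ with a nonvanishing holomorphic prefactor, hence $\psi_t(\ka)=f(\ka,\la)+O(1/t)$ uniformly on a polydisc and therefore $\nabla_\ka\psi_t(0)=\nabla_\ka f(0,\la)+O(1/t)$ by Cauchy's estimates, is exactly the right fix. The Taylor-expansion bookkeeping and the positivity of $\si_\la$ via convexity are as in Bryc. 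In short: same route as the paper, but you spell out the rate that the paper's citation tacitly presupposes.
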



The expectation value $-\frac{\partial}{ \partial \ka} f(\ka,\la)\big\str_{\ka=0} $ and the covariance $\frac{\partial^2}{ \partial \ka^2} f(\ka,\la)\big\str_{\ka=0}$ can be written in a more familiar form.
Introduce the operators
\beq
\triangle_{k,t} := U_{-t}^\la  \d\fock(h_k) U_{t}^\la -  \d\fock(h_k) ,     
\eeq
Then
\begin{eqnarray}
-\frac{\partial}{ \partial \ka} f(\ka,\la)\big\str_{\ka=0}  &=&  \lim_{t \uparrow \infty} \frac{1}{t}  \rho_\caE \otimes
\rho_\res [ \triangle_{k,t} ]   =:     \langle \triangle_{k} \rangle  \\
\frac{\partial^2}{ \partial \ka_k \partial_{\ka_{k'}}} f(\ka,\la)\big\str_{\ka=0}  &=&  \lim_{t \uparrow \infty} \frac{1}{t}  \rho_\caE \otimes
\rho_\res \left[ \left(\triangle_{k,t}- t \langle \triangle_{k} \rangle \right)     \left(\triangle_{k',t}-  t\langle \triangle_{k'} \rangle \right)  \right]  
\end{eqnarray}
where the convergence of the expressions on the RHS is a consequence of the analyticity of $f(\ka,\la)$.
However, it is not true in general (beyond second order in $\ka$) that
\beq
 \rho_\caE \otimes
\rho_\res  \left[
\Ga (w_{-\ka}) U_{-t}^{\la} \Ga (w_{\ka}) U_{t}^{\la} \right] 
=   \rho_\caE \otimes
\rho_\res [\e^{- \sum_k \ka_k  \triangle_{k,t}}].
\eeq
See \cite{derezinskideroeckmaes} for a thorough discussion of different approaches to quantum fluctuations.

\section{Discussion} \label{sec: discussion}

\subsection{Initial state}\label{sec: initial state}
We formulate our result only for particular intitial states, namely $\rho_\caE \otimes \rho_\res$ with $\rho_\res$ the vacuum state.  One could ask whether Theorem \ref{thm: main} still holds for a different inital state. 
In fact, by a slight generalization of our method, one can prove (see e.g.\ the previous version of the present paper) that the same result holds if one replaces 
\beq \label{rem: object study} \rho_{\caE} \otimes \rho_\res \left[
\Ga (w_{-\ka}) U_{-t}^{\la} \Ga (w_{\ka}) U_{t}^{\la} \right]  \eeq
by 
\beq \label{rem: object study s} \rho_{\caE} \otimes \rho_\res \left[  U_{-s}^{\la}
\Ga (w_{-\ka}) U_{-t}^{\la} \Ga (w_{\ka}) U_{t}^{\la}   U_{s}^{\la}  \right]  \eeq
for arbitrary $s$.  That is,  $f(\ka,\la)$ is  independent of $s$.

However, in  Section \ref{sec: transport}, one sees that the very choice of our object of study \eqref{rem: object study} depends on the fact that $\rho_\res$ is 'diagonal' in the operators $\d\fock(h_k)$. This (or rather, its finite-volume analogue) is used in going from \eqref{def: prob} to \eqref{eq: F finite}. Expressed more dramatically, an expression like \eqref{rem: object study s} does not appear!

\subsection{Thermodynamic limit} \label{sec: thermodynamic limit}
We skipped over a thorough justification of the object \eqref{def: functions analytic cont}, which features in our results. We remedy this by telling in which sense the dynamical system is the infinite-volume version of the finite-volume systems and how the expression \eqref{def: functions analytic cont} emerges.
Usually,  thermodynamical limits are constructed by specifying volumes which go to infinity in some sense (e.g.\ in the sense of Van Hove). In our case, such an explicit setup is not necessary (though of course possible).  
 We simply demand  the following relation between the finite-volume objects and the objects introduced in Section \ref{sec: zerotemp}.
 \begin{assumption}[Thermodynamic limit of finite-volume models]\label{ass: thermo limit}
Let
  \beq g_{1,t}(x)= \frac{\e^{-\i t x} }{\e^{\be_k x}-1}  ,  \qquad  g_{2,t}(x)=  \e^{-\i t x}( {1-\frac{1}{\e^{\be_k x}-1}}), \qquad t \in \bbR, x \in \bbR^+
  \eeq
For $S \in \caB(\caE)$, $i=1,2$, we have
\beq 
\norm {V}^*_{k,\infty}   S  g_{i,t}( h_{k,\infty} )  {V}_{k,\infty} \norm < \infty
\eeq
and
\beq \label{eq: thermo limit}
{V}^*_{k,n} S  g_{i,t}( h_{k,n}) V_{k,n}   \mathop{\longrightarrow}\limits_{n \uparrow \infty}  {V}^*_{k,\infty}   S  g_{i,t}( h_{k,\infty} )  {V}_{k,\infty} 
\eeq 
uniformly on compacts in $t \in \bbR$
\end{assumption}

If Assumption \ref{ass: thermo limit} is satisfied, a large class of correlation functions converges. There is quite some  arbitrariness in this statement, which is usually not considered in the literature.

 Define 
\begin{eqnarray}
\Phi_{k,n} (t) &:=&   \e^{\i t H_{0,n}  } (a(V_{k,n} )+a^{*}(V_{k,n} ))  \e^{-\i t H_{0,n}  } \\
\Phi_k(t) &;=&   \e^{\i t H_{0 } } (a(V_{k} )+a^{*}(V_{k} ))  \e^{-\i t H_{0 } } \\
\end{eqnarray}
Assume Assumption \ref{ass: thermo limit}, then for all $t,t' \in \bbR$  and $S \in \caB(\caE)$, 
\beq \label{eq: thermo limit2}
 \rho_{\caE} \otimes \om_{\res,n} \big[    \Phi_{k,n} (t) S   \Phi_{k',n}(t')  \big] \quad  \mathop{\longrightarrow}\limits_{n \uparrow \infty}  \quad  \rho_{\caE} \otimes \rho_\res \big[    \Phi_{k}(t)  S \Phi_{k'}(t')   \big]  
\eeq
Of course, from  \eqref{eq: thermo limit2} one deduces also convergence of higher-order correlation functions (since the states $\om_{\res,n}$ and $\rho_\res$ are quasifree, those are expressed in terms of the second order correlation function). In particular, one has also convergence of the same correlation functions with the time-dependence now given by the fully interacting evolution, that is, let 
\begin{eqnarray}
\Phi^{I}_{k,n} (t) &:=&   \e^{\i t H_{\la,n } }  \e^{-\i t H_{0,n}  } \Phi_{k,n} (t)   \e^{\i t H_{0,n}  } \e^{-\i t H_{\la,n } } \\
\Phi^{I}_k(t) &;=&   \e^{\i t H_{\la } }   \e^{-\i t H_{0 } } \Phi_k(t)  \e^{\i t H_{0 } } \e^{-\i t H_{\la} } ,
\end{eqnarray}
then equation \eqref{eq: thermo limit2} holds with $\Phi^{I}$ replacing $\Phi$, as follows from a Dyson expansion, e.g.\  \eqref{def: series kappa}.

 It is now straightforward to see that Assumptions \ref{ass: laplace} and \ref{ass: thermo limit}  imply 
 \beq\label{conv of generating functions}
\rho_{\caE} \otimes \rho_{\res_n} \big[  \Ga (w_{-\ka,n}) e^{\i t H_{\la,n} }   \Ga (w_{\ka,n}) \e^{-\i t H_{\la,n} }   \big]  \quad  \mathop{\longrightarrow}\limits_{n \uparrow \infty} \quad   \rho_{\caE} \otimes \rho_\res \left[
\Ga (w_{-\ka}) U_{-t}^{\la} \Ga (w_{\ka}) U_{t}^{\la} \right]  
\eeq
where the LHS was introduced through physical considerations in Section \ref{sec: transport}.

The critical reader might wonder why there is in our presentation no mention of $W^*$-algebra's, which often play a promiment role in the mathematical formulation of statistical mechanics. 
If one defines the Araki-Woods algebra $\frA$ as in Section \ref{sec: proof of gc}, one finds that the dynamics 
\beq
\frA \ni A \mapsto  U_{-s}^{\la}
A U_{s}^{\la}, \qquad s \in \bbR
\eeq
leaves $\frA$ invariant.  Physically, one should  restrict the state $\rho_\caE \otimes \rho_\res$, originally defined on $\caB(\caE \otimes \sfock(\frh))$,  to $\frA$. 
However, in our approach, it is neither mathematically nor physically necessary to consider this restriction. We study  the expression \eqref{rem: object study}, which is well-defined and whose motivation is via 
\eqref{conv of generating functions}.

For the same reasons, we do not have to ask ourselves whether the operator \eqref{def: ham} is the right choice. In the literature, this operator is called the \emph{semi-standard} Liouvillean, but one can also consider the \emph{standard} Liouvillean.  Again, the resolution of any possible ambiguity is via finite-volume limits. That being said, it might be worth remarking that \eqref{rem: object study} can be expressed as the expectation of powers of a relative modular operator, see \cite{matsuitasaki}, thus providing a more algebraic starting point for our work. Another possible approach is in \cite{avronbachmann}, where the expression \eqref{rem: object study} is constructed (for fermions) via different, but essentially equivalent reasoning.


\subsection{Comparison with other works}\label{sec: comparison}

There has lately been a lot of work on spin-boson and spin-fermion models, or more general, Pauli-Fierz models.

We feel our work is technically closest to   \cite{jaksicpillet2}, in which one considers the spin-boson model and one proves that the generator of the dynamics has absolutely continuous spectrum for $\la \neq 0$, except for one eigenvalue which corresponds to the stationary state. The other eigenvalues of the system at $\la=0$ turn into resonances whose location is in first nonvanishing order predicted by the Lindblad generator. 
The assumptions are very similar; to allow for a comparison, we assume that 
\beq \label{connectionVpsi}
V u= D u \otimes  \psi 
\eeq
for some $D \in \caB(\caE)$ and $\psi \in \frh \sim L^2(\bbR,L^2(\caY,\d y))$, in which case $a^*(V)=D \otimes a^*(\psi)$.

The basic assumption in \cite{jaksicpillet2} reads 
\begin{assumption}[analytic coupling]\label{ass: analytic coupling}
 The function $\psi$ is analytic in a strip $\{\Im z \leq \delta \}$ and   \beq \sup_{\ga \in [-\delta,\delta] } \int_\bbR \d \xi  \norm \psi(\xi+\i \ga )\norm^2 < \infty  \eeq
\end{assumption}
Assumption \ref{ass: analytic coupling} implies that  \beq \left\str \int_\bbR \d \xi \norm \psi(\xi)\norm^2  \e^{-\i t \xi } \right \str  \leq C \e^{- t \delta}  \eeq 
 which is just Assumption \ref{ass: correlation} for $\ka=0$.
However, the $\ka$ have no analogue in \cite{jaksicpillet2} and we would need to assume Assumption \ref{ass: analytic coupling}  with $\psi$ derived through  \eqref{connectionVpsi} from $ w_{\frac{\ka}{2}}V$ rather than from  $V$.

In contrast, we do not need any additional  infrared condition on $\xi \mapsto \psi(\xi)$, contrary to \cite{jaksicpillet2}. This is because we construct the dynamics via the Dyson expansion instead of via the Nelson commutator theorem. 
Physically speaking\footnote{That is, in terms of the zero-temperature coupling operator, or 'form-factor' $V_{k,\infty} $}, there is of course already an infrared condition present since
\beq
  \norm  V \norm < \infty      \quad   \Rightarrow \quad   \sum_{k \in \caK} \norm ( \be_k h_{k,\infty} )^{-1/2}V_{k,\infty}  \norm < \infty
\eeq
with the notation as in Section \ref{sec: zerotemp}.
 
 The technique of \cite{jaksicpillet2} consists of a spectral deformation of the generator $H_\la$. We employ time-dependent perturbation theory and we rewrite the Dyson expansion as a one-dimensional polymer model. This is embodied in Lemma \ref{lem: bound caW}. Starting from that lemma, one can obtain our result through a simple cluster expansion (as in the previous version of this paper).  However, since the polymer model is one-dimensional, we can apply  the transfer-matrix technique. In dealing with the transfer matrix, we use a variant of the spectral deformation technique, such that our technique is not as different from \cite{jaksicpillet2} as might seem.
 
Assumption \ref{ass: correlation} cannot be weakened without changing the method drastically. Note that one cannot assume Assumption \ref{ass: correlation}  for $D=\bbR^{\str \caK \str}$ since that would imply that 
\beq  \bbR^{\str \caK \str} \ni \ka  \mapsto p_{\ka}(t) \eeq is a bounded analytic function, hence constant. 
%

Results that need weaker regularity properties of $\psi(\xi)$ are e.g.\ \cite{bachfrohlichreturn}, \cite{derezinskijaksicreturn,derezinskijaksicspectral}, 
\cite{froehlichmerkli}. In those works one employs Mourre theory or renormalization group techniques, however they do not permit to localize the resonances.

A different type of works are those using scattering theory. This approach was initiated in \cite{ruelleness}, but so far, it has not been successful for spin-boson type models, although it works well for junctions \cite{froehlichmerkliueltschi}.

From the physical point of view, our result is closer to \cite{jaksicpillet3,jaksicogatapilletspinfermion} where one studies a non-equilibrium setup and one derives approach to a non-equilibrium steady state and the Green-Kubo relations, or to \cite{jaksicpautratpillet}, where one studies a form of the central limit theorem. See \cite{derezinskideroeckmaes} for an extensive discussion of the difference and similarities of different approaches to quantum fluctuations and central limits.

\section{Proof of Theorems \ref{thm: main} and  \ref{thm: gc}} \label{sec: proof of main theorem}

\subsection{Construction of the dynamics}\label{sec: construction dynamics}

Let $1_{n}$ be the projector on  $\caE \otimes \sfock^n(\frh)$ (the $n$-particle sector, see Section \ref{sec: notation}) and let the domain $\caD_1 \subset \caE \otimes \sfock (\frh)$ be defined by 
 \beq  \psi \in \caD_1 \Leftrightarrow  \exists C >0 :   \norm  1_{n} (\psi) \norm \leq \frac{C^n}{\sqrt{n!}} \eeq
Let $H_0:= E+ \d \Ga (h)$ and 
\beq
I_{\ka}(u):=\e^{\i u H_0} (a(w_{-\overline{\ka}}V )+a^*( w_{\ka}V))      \e^{- \i u H_0}.
\eeq

For  $\Re \ka =0$, the series
\beq \label{def: series kappa}
\e^{\i t H_0}  \sum_{n\in \bbN} \mathop{\int}\limits_{0\leq u_1\ldots  u_n \leq t}  \d u_1\ldots  \d u_n   I_{\frac{\ka}{2}}(u_n) \ldots  I_{\frac{\ka}{2}}(u_1),
\eeq
originally defined on $\caD_1$, extends  to the unitary group (Theorem 6.1 \cite{derezinskideroeck2})
  \beq \label{def: dyson kappa}
\Ga (w_{\frac{\ka}{2}}) U_{t}^{\la} \Ga (w_{-\frac{\ka}{2}}) 
  \eeq
Since the argument in  \cite{derezinskideroeck2} showing that \eqref{def: series kappa} is a a strongly continuous group on $\caD_1$, depends only on the assumption $\norm V \norm <\infty$, this remains true for $\ka$ satisfying Assumption  \ref{ass: laplace}, and \eqref{def: series kappa} can be taken as the definition of \eqref{def: dyson kappa}.\\

 In what follows and unless stated otherwise, we will assume  that Assumptions \ref{ass: laplace} and \ref{ass: reservoir ham}  are satisfied and
that $\Re \ka \in D$.

\subsection{Dynamics and notation on $\caB_1(\caH)$}  \label{sec: dynamics and notation on trace-class}
It is advantageous to rewrite the object of study in a slightly more abstract way.
Let   ${\caD}_{1,\otimes}$ stand for the subspace of $\caB_1(\caE \otimes \sfock(\frh))$ defined by 
finite linear combinations of $\str \phi_1 \rangle \langle \phi_2 \str$ for $ \phi_1,\phi_2 \in \caD_1 $.
From the conclusions of Section \ref{sec: construction dynamics}, it follows that
\beq \label{def: Z}
A \mapsto \left( \Ga (w_{\frac{\ka}{2}}) U_{t}^{\la} \Ga (w_{-\frac{\ka}{2}}) \right) \, A \,  \left(\Ga (w_{-\frac{\ka}{2}}) U_{-t}^{\la} \Ga (w_{\frac{\ka}{2}}) \right)    =: Z^{{\ka,\la}}_t(A)
\eeq
maps ${\caD}_{1,\otimes}$ into itself.  
In what follows, we write 
\beq
M (S):= \i [E,S] 
\eeq
as a bounded operator on $\caB(\caE)$.   
We define the embedding $ \embed: \caB(\caE) \to \caB_1(\caE \otimes \sfock(\frh))  $
\beq
  S  \mapsto  \embed (S) = S \otimes \str \Om\rangle \langle  \Om \str  
\eeq
and the compression $\compress: \caB_1(\caE \otimes \sfock(\frh))  \to \caB(\caE)$
\beq 
(S \otimes R)  \mapsto  \compress  (S \otimes R) = S \Tr_{{\sfock(\frh)}}[R].
\eeq
with $ \Tr_{{\sfock(\frh)}} $ the trace on $\caB_1(\Tr_{{\sfock(\frh)}})$ (Hence $\compress$ is actually a partial trace).

We have hence rewritten 
\beq \label{eq: rewritten object}
 \rho_{\caE} \otimes \rho_\res \left[
\Ga (w_{-\ka}) U_{-t}^{\la} \Ga (w_{\ka}) U_{t}^{\la} \right] = \Tr\left[ \left( \compress    Z^{{\ka,\la}}_t    \embed  \right)(\tilde{\rho}_\caE) \right]
\eeq
where  $ \tilde{\rho}_\caE$ is the density matrix, corresponding to the state $\rho_\caE$, i.e.\   $\rho_\caE \left[  S \right] =  \Tr \left[ \tilde{\rho}_\caE  S  \right] $.

\subsection{The deformed Lindblad generator}

If $\norm p_{\ka} \norm_{1}:= \int \d t \, p_{\ka}(t) <\infty$ for $\ka=0$,  we can define
\beq\label{def: generator}
 \Upsilon =
-\iu\sum_{\omega\in\caF}\ \sum_{e-e'=\omega}\int_0^\infty
 1_{\caE_e} V^*  1_{\caE_{e'}}\e^{-\iu t (h
-\omega)} V  \, 1_{\caE_e} \d t.
 \eeq
Assuming additionally the first statement of Assumption \ref{ass: fermi golden rule}, we introduce the deformed Lindblad generator. For $S \in
\caB(\caE)$, let
\beq L_\ka(S)= -\iu ( \Upsilon S - S  \Upsilon^*) +2 \pi  \sum_{\omega\in\caF}\ \sum_{e-e'=\omega}
 1_{\caE_e} V^*  1_{\caE_{e'}} \,  S \,  \delta (h-\om)  \,      w_\ka  V  \, 1_{\caE_e}   \label{def: lind}\eeq
 where the operator-valued Dirac-delta distribution $\delta(\cdot)$ is well-defined by the continuity assumption in Assumption \ref{ass: fermi golden rule}. For example, one can take a sequence of functions converging in the sense of distributions to $\delta(\cdot-\om)$, then the mentioned continuity assumption assures convergence in \eqref{def: lind}
One checks, see e.g.\ \cite{derezinskideroeck2}, that for $\ka=0$, or equivalently, $w_\ka=1$, we recover the usual definition for the Lindblad generator, which satisfies
\beq
  \Tr [ L_{\ka=0}(S) ]=0
\eeq
However, since the second term in \eqref{def: lind} is a completely positive map, it follows that $\e^{t L_{\ka}}$ is a completely positive semigroup for $\{\Re \ka \in D\}$.

We need the following properties of $L_{\ka}$.
\bet \label{thm:  properties L}  
Let $L_{\ka}$ be as in \eqref{def: lind} and $M$ as defined in  Section \ref{sec: dynamics and notation on trace-class}.  
\ben
\item{ Assume Assumption \ref{ass: fermi golden rule} and fix a $\tau <0$. The operator $\e^{\tau L_\ka}$ has a maximal simple eigenvalue $\e^{\tau f_{\ka}} $ with $f(\ka) \in \bbR$ and there is a 'gap'  $g_\ka >0$ such that
\beq \label{eq: gap}
 \sup \{    \str z \str ,   z \in \sp (\e^{\tau L_\ka}) \setminus \e^{\tau f_{\ka}}    \} < \e^{\tau f_{\ka}} (1 -\e^{-\tau g_\ka})
\eeq
 The eigenvector corresponding to $\e^{\tau f_\ka}$ can be chosen a positive invertible operator.
 } 
\item{ \beq  [L_\ka, M]=0 \eeq } 
\item{ Assume $\norm p_{\ka} \norm_{1} <\infty$. For all $\tau >0$,
\beq   \norm  \compress Z_{\lakl \tau}^{\ka,\la} \embed - \e^{-\i \tau(  \lakl M+\i L_\ka ) } \norm \mathop{\longrightarrow}_{\la \downarrow 0} 0    \eeq
 where the LHS is continuous in $\la,\ka,\tau$. } 
\een 
\eet
Statement (1) of Theorem \ref{thm:  properties L} 
is the only place where we use the second statement of  Assumption \ref{ass: fermi golden rule}. It is a non-degeneracy assumption which enters the non-commutative Perron-Frobenius theorem.

\subsection{Dyson  expansion and transfer operator}

Our basic tool is a rearranged Dyson expansion, whose properties are collected in the upcoming Lemma \ref{lem: bound caW}.
Fix a parameter $\tau>0$ and define 
on $\caB(\caE)$ for $n \in \bbN_0$,
\beq \label{def: caW}
\caW_n^{\ka,\la,\tau} := \compress \, Z_{\lakl \tau }^{\ka,\la} \,\mathop{(1-  \embed \compress)}\limits_{{n-1}} \ldots \mathop{(1-  \embed \compress)}\limits_{2}\, Z_{\lakl \tau }^{\ka,\la} \,\mathop{(1-  \embed \compress)}\limits_{1}\, Z_{\lakl \tau }^{\ka,\la} \,\embed
\eeq
($n-1$ factors of $(1-  \embed \compress)$ inserted).  The definition \eqref{def: caW}  makes sense since $Z_{t }^{\ka,\la}$ maps $\caD_{1,\otimes}$ into itself (see Section \ref{sec: construction dynamics}) and, obviously, $\embed \caB(\caE) \subset \caD_{1,\otimes}$. Whenever reasonable, we will  abbreviate $\caW_n= \caW_n^{\ka,\la,\tau}$.
\begin{lemma} \label{lem: bound caW}  Let $\caW_n= \caW_n^{\ka,\la,\tau}$ be as above.
\mbox{}\ben
\item{ For all $m \in \bbN_{0}$,
\begin{eqnarray}
\compress Z^{\ka,\la}_{\lakl m \tau} \embed=  \sum_{r \in \bbN}
\sum_{
\left.
\begin{array}{c}
\sum_{i=1}^r n_i =m
\end{array}
\right.
}         \caW_{n_r }    \ldots  \caW_{n_2 }  \caW_{n_1 }
\end{eqnarray}
}

\item{Assume Assumption \ref{ass: correlation}. There is $c:=c(\ka,\la,\tau)>0$, vanishing as $\la \downarrow 0$ and continuous in the three parameters,  such that for $n>1$, 
\beq \norm \caW_n \norm \leq c^{n-1}  \eeq
}
\een

\end{lemma}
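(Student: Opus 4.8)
\textbf{Proof plan for Lemma \ref{lem: bound caW}.}

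The plan is to treat the two statements separately, since (1) is an algebraic identity and (2) is the quantitative estimate that is the real content. For statement (1), I would start from the group property of $Z^{\ka,\la}_t$ in $t$, namely $Z^{\ka,\la}_{s+t}=Z^{\ka,\la}_{s}\circ Z^{\ka,\la}_{t}$ on $\caD_{1,\otimes}$, which follows from Theorem \ref{thm: def ham} and Section \ref{sec: construction dynamics}. Writing $P:=\embed\compress$, this is a projection-valued operator on $\caB_1(\caH)$ with $P^2=P$ (since $\compress\embed=\Id$ on $\caB(\caE)$). Then one has the trivial resolution of identity $\Id = P + (1-P)$ inserted at each of the $m-1$ intermediate times $\lakl\tau, 2\lakl\tau, \ldots, (m-1)\lakl\tau$ in the factorization $\compress Z^{\ka,\la}_{\lakl m\tau}\embed = \compress\, (Z^{\ka,\la}_{\lakl\tau})^{m}\,\embed$. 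Expanding the product $\prod (P+(1-P))$ and grouping terms according to the locations where the factor $P=\embed\compress$ appears, each such term factorizes: between two consecutive occurrences of $P$ separated by $n_i$ time-steps one gets precisely $\compress Z_{\lakl\tau}^{\ka,\la}(1-P)\cdots(1-P)Z_{\lakl\tau}^{\ka,\la}\embed = \caW_{n_i}$, because $\embed\compress = \embed\, \Id\,\compress$ splits the string and the leftover $\embed$ on one side meets the $\compress$ on the other. Summing over the number $r$ of blocks and over compositions $n_1+\cdots+n_r=m$ gives the stated identity; one must also check the $n_i=1$ case is consistent with the convention $\caW_1 = \compress Z_{\lakl\tau}^{\ka,\la}\embed$ (no $(1-P)$ inserted), and that $\caW_n$ for $n\ge 2$ is exactly the string with $n-1$ insertions, which matches \eqref{def: caW}. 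This step is bookkeeping and I expect no obstacle beyond care with the endpoints.

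For statement (2), the idea is that $\caW_n$ for $n\geq 2$ begins and ends with the projection $(1-\embed\compress)$ adjacent to an $\embed$, so the vacuum-to-vacuum contribution is subtracted off, and what survives is a Dyson expansion in $\la$ in which at least two of the interaction vertices $a(w_{-\overline\ka}V)+a^*(w_\ka V)$ must contract with each other across reservoir lines that traverse a time-interval of length at least of order $\tau$ (in rescaled time $\lakl\tau$). Concretely, I would expand each factor $Z_{\lakl\tau}^{\ka,\la}$ into its Dyson series \eqref{def: series kappa}, so that $\caW_n$ becomes a sum over the number and times of creation/annihilation vertices on the $n$ time-blocks; then one applies Wick's theorem to the vacuum state $\rho_\res$ (using that $\compress$ takes the partial trace and $\embed$ re-injects the vacuum). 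The presence of $(1-\embed\compress)$ between blocks forbids all vertices in a block from being paired within that block in a way that returns to the vacuum — more precisely, it forces the existence of a contraction line linking block $i$ to block $i+1$ for each $i$, so there are at least $n-1$ "long" propagator lines, each contributing a factor $p_\ka(t)$ integrated over a time difference spanning a gap of order $\lakl\tau$. Using Assumption \ref{ass: correlation}, $p_\ka(t)\le C\e^{-\al|t|}$, each such line contributes a factor bounded by something like $\int \d t\, C\e^{-\al|t|}$ times a power of $\la$ from the two vertices it joins; crucially the combinatorial sum over the remaining "short" pairings inside each block is controlled by the same exponential decay and an overall factor per block that tends to $0$ with $\la$ because it carries a net power $\la^{2}$ (two vertices) without an $O(1/\la^2)$ time-integration to compensate — that is the mechanism producing the constant $c=c(\ka,\la,\tau)\to 0$ as $\la\downarrow 0$. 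Continuity in $\ka,\la,\tau$ follows from dominated convergence applied to the absolutely convergent Wick expansion, using Assumption \ref{ass: laplace} to control $\norm w_{\ka/2}V\norm$.

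\textbf{Main obstacle.} The hard part will be organizing the Wick/Dyson expansion so that the cancellation encoded by the projections $(1-\embed\compress)$ is made quantitative: one must show that subtracting the vacuum component really does force at least one cross-block contraction per interface and then extract, from the resulting constrained sum over pairings and vertex times, a clean geometric bound $c^{n-1}$ with $c$ uniform on compacts in $(\ka,\la,\tau)$ and $c\to 0$ as $\la\to 0$. The combinatorics of summing over all pairing patterns while keeping the exponential decay from Assumption \ref{ass: correlation} "attached" to the right lines — and making sure the number of pairings (which grows factorially in the number of vertices) is beaten by the smallness of $\la^{2}$ per vertex and the integrability of $p_\ka$ — is the delicate estimate; I expect this is exactly what is deferred to Section \ref{sec: estimates}. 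A secondary subtlety is justifying all manipulations on the dense domain $\caD_{1,\otimes}$ and then extending by the bounds just obtained, but given Theorem \ref{thm: def ham} and the remarks in Section \ref{sec: construction dynamics} this should be routine.
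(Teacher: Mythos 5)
Your plan is essentially the paper's proof: statement (1) is exactly the insertion of $\Id = \embed\compress + (1-\embed\compress)$ at the intermediate times, and for statement (2) the paper (via \eqref{relation WV} and Lemma \ref{lem: estimates on caV}) implements your idea that each $(1-\embed\compress)$ forces a Wick contraction to cross the corresponding interface, each crossing contributing a small constant $c=c_-+c_+$ that is either $O(\la^2)$ (short crossings) or exponentially small in $\la^{-2}\tau$ (long crossings). The combinatorial worry you flag is resolved by replacing $\vee$ with $+$ in the crossing constraints and using the binomial theorem together with the a priori bound $\int_{\dsig_t}\d\dsi\,\norm\caV(\dsi)\norm\le\e^{t\norm d_\ka\norm_1}$.
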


In what follows, we use the Hilbert space $l_2(\bbN_0) \otimes \caB_2(\caE)$.
  Let for $n \in \bbN_0$,  $e_n$ be the canonical $n$'th base vector in $l_{2}(\bbN_0)$ and let S be the unilateral shift, defined by  (setting $e_{0}:=0$)
\beq S e_n = e_{n-1} 
\eeq
Recall that $\caE$ is finite-dimensional, which allows to define
 the embedding $P_n :  \caB(\caE) \to l_2(\bbN_0) \otimes \caB_2(\caE): \,  u \mapsto e_{n} \otimes u$ and compression $P^*_{n}:  e_{n} \otimes u \mapsto u$.
We are led to examine the following operator on $l_{2}(\bbN_0) \otimes \caB_2(\caE)$; 
\beq
T^{\ka,\la,\tau}=     \sum_{n\in \bbN_0}   P_n  \caW_n P^*_1  + S\otimes 1
\eeq
From Lemma \ref{lem: bound caW}(1), one has
\beq \label{relation ZT}
\compress Z_{\ka,\la}^{\lakl m \tau}  \embed= P^*_1 (T^{\ka,\la,\tau})^m   P_1 
\eeq

If the operator $T:=T^{\ka,\la,\tau} $ had a maximal eigenvalue, isolated from the rest of the spectrum, we could easily estimate the $n\nearrow \infty$ asymptotics of \eqref{relation ZT}. However, upon realizing that 
\beq \sp S = \{z \in \bbC, \str z \str \leq 1\}\eeq
this surely fails at $\ka=0$, since the highest eigenvalue of $\e^{\tau L_{\ka=0}}$ is $1$. This difficulty is addressed in the next section.

\subsection{Spectral deformation}

Introduce the unbounded operator  \beq R =\sum_{n\in \bbN_0}  n \, P_n P^*_n . \eeq 
The following statements are straightforward.
\begin{lemma} \label{lem: properties R}
For $\delta \in \bbR$ and $W \in \caB(\caB(\caE)) $,
\ben 
\item{  $\e^{\delta R}  S   \e^{-\delta R}= \e^{-\delta}  S$ } 
 \item{   $     \e^{\delta R }   P_n W  P^*_m \e^{-\delta R}     =  \e^{(n-m) \delta} P_n W  P^*_m  $    }
 \item{  $  P^*_1 T^m P_1   =  P^*_1 \left( \e^{\delta R} T\e^{-\delta R}  \right)^m P_1 $  }
\een
\end{lemma}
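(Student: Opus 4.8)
The plan is to treat $R=\sum_{n\in\bbN_0} n\,P_nP^*_n$ as the obvious grading (``number'') operator on $l_2(\bbN_0)\otimes\caB_2(\caE)$: it acts diagonally, $R(e_n\otimes u)=n\,(e_n\otimes u)$, so $\e^{\delta R}(e_n\otimes u)=\e^{\delta n}(e_n\otimes u)$. Although $\e^{\delta R}$ is unbounded for $\delta>0$, all three identities are assertions about operators that preserve the dense subspace $\caD$ spanned by the vectors $e_n\otimes u$ with $n\in\bbN_0$, $u\in\caB_2(\caE)$, so I would verify each of them by applying both sides to such a basis vector, where everything collapses to a finite computation with scalar factors $\e^{\pm\delta m}$. (At the very end one may, if desired, upgrade the identities to all of $l_2(\bbN_0)\otimes\caB_2(\caE)$ by invoking Lemma \ref{lem: bound caW}, which shows that for the relevant range of $\delta$ the deformed operator is actually bounded; but for the statement as written, working on $\caD$ suffices.)

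For (1): with $S$ read as $S\otimes 1$, one has $(S\otimes 1)(e_n\otimes u)=e_{n-1}\otimes u$ (and $e_0:=0$, so the $n=0$ case is trivial), hence
\beq
\e^{\delta R}(S\otimes 1)\e^{-\delta R}(e_n\otimes u)=\e^{-\delta n}\e^{\delta(n-1)}(e_{n-1}\otimes u)=\e^{-\delta}(S\otimes 1)(e_n\otimes u).
\eeq
For (2): $P_nWP^*_m$ sends $e_m\otimes u\mapsto e_n\otimes Wu$ and kills $e_j\otimes u$ for $j\neq m$; conjugation by $\e^{\delta R}$ contributes $\e^{-\delta m}$ (from $\e^{-\delta R}$ acting first, on the $m$-sector) and $\e^{\delta n}$ (from $\e^{\delta R}$ acting last, on the $n$-sector), producing exactly the factor $\e^{(n-m)\delta}$.

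For (3): I would combine (1) and (2) applied termwise to the definition $T=\sum_{n\in\bbN_0}P_n\caW_nP^*_1+S\otimes 1$, giving $\widetilde T:=\e^{\delta R}T\e^{-\delta R}=\sum_{n\in\bbN_0}\e^{(n-1)\delta}P_n\caW_nP^*_1+\e^{-\delta}(S\otimes 1)$, and then use that on $\caD$ the conjugations telescope, $\widetilde T^{\,m}=\e^{\delta R}T^m\e^{-\delta R}$. Since $P_1$ lands in the sector $n=1$ we have $\e^{-\delta R}P_1=\e^{-\delta}P_1$ and, dually, $P^*_1\e^{\delta R}=\e^{\delta}P^*_1$, so
\beq
P^*_1\widetilde T^{\,m}P_1=P^*_1\e^{\delta R}T^m\e^{-\delta R}P_1=\e^{\delta}\e^{-\delta}\,P^*_1T^mP_1=P^*_1T^mP_1.
\eeq
There is no real obstacle here: the content is entirely bookkeeping with the grading $R$, and the only point that deserves an explicit word is the unboundedness of $\e^{\delta R}$, which I would handle by simply keeping all manipulations on the common core $\caD$.
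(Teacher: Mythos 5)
Your verification is correct and is exactly the direct bookkeeping the paper has in mind; the paper itself gives no proof, dismissing the lemma as ``straightforward,'' and your computation on the basis vectors $e_n\otimes u$ (with the $n=0$ degeneracy handled by the convention $e_0=0$) together with the telescoping in (3) and the observations $\e^{-\delta R}P_1=\e^{-\delta}P_1$, $P^*_1\e^{\delta R}=\e^{\delta}P^*_1$, is precisely what that remark is eliding. Your explicit mention of working on the dense core before invoking the bounds of Lemma~\ref{lem: bound caW} is a small but sensible point of extra care.
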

Most importantly, the operator $\e^{\delta R} T \e^{-\delta R}$ does have an isolated eigenvalue for well-chosen $\delta$, as we show now.
\begin{lemma} \label{lem: spectral deformation}
Let $\hat{\delta}:= -1/2 \ln c(\ka,\la,\tau)$, where the latter was introduced in Lemma \ref{lem: bound caW}. There is a $\la_0>0$  such that for $ \la \in [-\la_{0},\la_{0}]   $,  
$\ka \in D$ and $\tau$ varying in some compact set $D_{\tau}$,
 the operator \beq \e^{ \hat{\delta}R} \,  T \, \e^{-\hat{\delta} R}  \label{def: deformed operator}\eeq  
%
 has a maximal simple eigenvalue $\e^{\tau f_{\ka,\la,\tau}}$ with $ f_{\ka,\la,\tau} \in \bbR$. There is $g_{\ka,\la,\tau}>0$ such that
 \beq \label{eq: gap2}
 \sup \{    \str z \str ,   z \in \sp(   e^{\hat{\delta} R}  T \e^{-\hat{\delta}R}  )  \setminus \e^{\tau f_{\ka,\la,\tau}}   \} < \e^{\tau f_{\ka,\la,\tau}} (1 -\e^{-\tau g_{\ka,\la,\tau}}).
 \eeq
 The eigenvector $G_{\ka,\la,\tau} $ corresponding to this eigenvalue can be chosen such that  $P^*_1 G_{\ka,\la,\tau} \in \caB(\caE)$  is an invertible, positive operator. The function $f_{\ka,\la,\tau}$ is real-analytic in $\ka \in D$, $\str \la \str \leq \la_{0}$ and $\tau \in D_{\tau}$.
\end{lemma}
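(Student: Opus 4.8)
The plan is to realize the deformed operator $\e^{\hat\delta R}\,T\,\e^{-\hat\delta R}$ as a small perturbation of a block-structured operator whose spectral properties we understand, and then apply analytic perturbation theory together with the noncommutative Perron--Frobenius input of Theorem~\ref{thm:  properties L}(1). First I would write out, using Lemma~\ref{lem: properties R}(2), the explicit form
\beq
\e^{\hat\delta R}\,T\,\e^{-\hat\delta R} = \sum_{n\in\bbN_0} \e^{(n-1)\hat\delta}\, P_n \caW_n P^*_1 + \e^{-\hat\delta}\, S\otimes 1 .
\eeq
With the choice $\hat\delta = -\tfrac12\ln c(\ka,\la,\tau)$ and the bound $\norm{\caW_n}\le c^{n-1}$ from Lemma~\ref{lem: bound caW}(2) valid for $n>1$, the ``tail'' terms $\sum_{n\ge 2}\e^{(n-1)\hat\delta} P_n\caW_n P^*_1$ are norm-bounded by $\sum_{n\ge 2} c^{(n-1)/2}=O(c^{1/2})$, hence $O(\la)$ as $\la\downarrow 0$; and the shift term $\e^{-\hat\delta}S\otimes 1 = c^{1/2}\,S\otimes 1$ is likewise $O(c^{1/2})$, again $O(\la)$. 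Therefore the only $O(1)$ piece is the rank-type block $P_1\caW_1 P^*_1$, i.e.\ the operator acting on $e_1\otimes\caB_2(\caE)$ as $\caW_1^{\ka,\la,\tau}=\compress Z^{\ka,\la}_{\lakl\tau}\embed$ on the $n=1$ slot and zero elsewhere.

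Next I would invoke Theorem~\ref{thm:  properties L}(3), which gives $\norm{\compress Z^{\ka,\la}_{\lakl\tau}\embed - \e^{-\i\tau(\lakl M + \i L_\ka)}}\to 0$ as $\la\downarrow 0$. Since $[L_\ka,M]=0$ (Theorem~\ref{thm:  properties L}(2)) and $M$ generates a group of isometries on $\caB_2(\caE)$, we have $\norm{\e^{-\i\tau(\lakl M+\i L_\ka)}} = \norm{\e^{\tau L_\ka}}$ and, more to the point, the spectrum of $\e^{-\i\tau(\lakl M+\i L_\ka)}$ relevant to the Perron--Frobenius eigenvalue is governed by $\e^{\tau L_\ka}$: by Theorem~\ref{thm:  properties L}(1), $\e^{\tau L_\ka}$ has a simple maximal eigenvalue $\e^{\tau f_\ka}$ separated by a gap $g_\ka>0$ from the rest of its spectrum, with positive invertible eigenvector. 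One subtlety: the $M$-twist could in principle rotate other eigenvalues of $\e^{\tau L_\ka}$ onto the circle of radius $\e^{\tau f_\ka}$; but since $\e^{\tau f_\ka}$ is the \emph{unique} eigenvalue of maximal modulus of $\e^{\tau L_\ka}$ and $M$ commutes with $L_\ka$, the eigenprojection for $\e^{\tau f_\ka}$ is $M$-invariant and one checks $M$ acts as $0$ on it (the Perron eigenvector is positive, hence $M$-fixed up to the commutation), so $\e^{\tau f_\ka}$ remains a simple isolated eigenvalue of $\e^{-\i\tau(\lakl M + \i L_\ka)}$. Embedding this into $l_2(\bbN_0)\otimes\caB_2(\caE)$ via the $n=1$ block, the unperturbed operator $T_0 := P_1\,\e^{-\i\tau(\lakl M+\i L_\ka)}\,P^*_1$ has $\e^{\tau f_\ka}$ as an isolated simple eigenvalue and the rest of its spectrum (including the entire $n\ne 1$ kernel, which contributes $\{0\}$) strictly inside radius $\e^{\tau f_\ka}(1-\e^{-\tau g})$ for a suitable $g>0$, uniformly for $\ka$ in compacts of $D$ and $\tau\in D_\tau$.

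Then I would close the argument by analytic perturbation theory (Kato): the full deformed operator differs from $T_0$ by something of norm $O(\la)$ (combining the $O(c^{1/2})$ tail/shift estimate above with the $\la\downarrow 0$ convergence in Theorem~\ref{thm:  properties L}(3)), and all of these pieces depend real-analytically on $(\ka,\la,\tau)$ — the tail $\caW_n$ terms through the Dyson series (which converges uniformly by Lemma~\ref{lem: bound caW}(2), giving analyticity of the sum), the scalar $c^{1/2}$ and $\hat\delta$ through their definitions, and $\compress Z\embed$ through the convergent expansion \eqref{def: series kappa}. Hence for $\la_0$ small enough and $\la\in[-\la_0,\la_0]$ the Riesz projection around $\e^{\tau f_\ka}$ persists, giving a maximal simple eigenvalue $\e^{\tau f_{\ka,\la,\tau}}$ with $f_{\ka,\la,\tau}$ real-analytic (it is real because the eigenvector stays in the positive cone — this is where I would use that $\e^{tL_\ka}$ is completely positive and that the perturbed Perron vector varies continuously, so positivity and in particular invertibility of $P^*_1 G_{\ka,\la,\tau}$ are preserved for small $\la$), and a uniform gap $g_{\ka,\la,\tau}>0$ persists by upper semicontinuity of the rest of the spectrum. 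The main obstacle I anticipate is precisely the interplay between the $M$-twist and the Perron--Frobenius structure in the unperturbed block — showing that multiplying by $\e^{-\i\tau\lakl M}$ does not destroy the isolation of the top eigenvalue — together with verifying that real-ness and positivity of the eigenvector survive the perturbation; the rest is bookkeeping with the estimates already packaged in Lemmas~\ref{lem: bound caW} and \ref{lem: properties R} and Theorem~\ref{thm:  properties L}.
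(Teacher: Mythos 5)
Your proposal follows essentially the same route as the paper: the decomposition of $\e^{\hat\delta R}T\e^{-\hat\delta R}$ into a leading block $\e^{-\i\tau(\lakl M+\i L_\ka)}$ living on the $n=1$ slot plus a perturbation $\triangle T$ whose norm is $O(c^{1/2})$ (shift term $\e^{-\hat\delta}S=c^{1/2}S$, the $\caW_1$-discrepancy controlled by Theorem~\ref{thm:  properties L}(3), and the geometric tail $\sum_{n\ge2}c^{(n-1)/2}$), followed by standard Kato perturbation theory around the isolated Perron eigenvalue. Where you diverge slightly is the treatment of the $M$-twist in the unperturbed block: the paper writes the resolvent explicitly over the $M$-eigenspaces via \eqref{eq: decomposed resolvent L} and derives a uniform resolvent bound on circles just inside $|z|=\e^{\tau f_\ka}$, whereas you argue qualitatively that a commuting unitary cannot promote subdominant eigenvalues to the maximal modulus, and that the Perron eigenvector lies in $\ker M$. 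That last claim is correct but as phrased (``positive, hence $M$-fixed up to the commutation'') is a bit loose; the clean one-line argument is that any eigenvector $S_*$ of $M$ with $MS_*=\omega_* S_*$ satisfies $\omega_*\Tr[S_*]=\Tr[\i[E,S_*]]=0$, and since $S_*>0$ has $\Tr[S_*]\ne0$ you get $\omega_*=0$. With that clarification your unperturbed analysis matches the paper's in substance.

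There is, however, a genuine gap at the very end, one the paper itself only handles tersely: real-analyticity \emph{at} $\la=0$. Your Kato-style argument gives analyticity of $f_{\ka,\la,\tau}$ in all parameters only on regions where both the unperturbed operator and the perturbation depend analytically on them — but the leading block contains the factor $\e^{-\i\tau\lakl M}$, which has no limit and is certainly not analytic in $\la$ at $\la=0$. So your chain of reasoning establishes analyticity for $\la\ne0$ (plus continuity at $\la=0$ via Theorem~\ref{thm:  properties L}(3)), which is strictly weaker than the claim ``real-analytic for $|\la|\le\la_0$''. Something extra is needed to close the $\la\to0$ limit: the paper invokes, essentially, a removable-singularity argument (``analytic for $\la\ne0$ and continuous at $\la=0$, hence analytic''); if one works with complex $\la$, this follows from Riemann's removable singularity theorem for bounded holomorphic functions, together with a Hartogs-type argument for joint analyticity in $(\ka,\la,\tau)$. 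As written, your proof simply asserts analyticity of the Riesz projection data without noticing this singular $\la$-dependence of the decomposition, so you should add the corresponding step.
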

\begin{proof}
By Lemma \ref{lem: properties R}, 
\beq
\e^{\delta R} T \e^{-\delta R} = \e^{-\i \tau (\lakl M +\i L_\ka)  }  + \triangle T  \eeq
where 
\beq
\triangle T :=\e^{-\delta} S   +   (\caW_1- \e^{-\i \tau (\lakl M +\i L_\ka)  } ) +\sum_{n>1} e^{(n-1) \delta}   P_n  \caW_n P^*_1
\eeq
By Lemma \ref{lem: bound caW} (let  $c$ be as defined therein) and assuming $\str c \e^{\delta}\str <1$,
 \beq \label{eq: smallness perturbation}
 \norm \triangle T \norm \leq \e^{-\delta}  + \norm \caW_1- \e^{-\i \tau (\lakl M +\i L_\ka)  } \norm +  (\norm P^*_1 \norm \sup_{n \in \bbN_0} \norm P_n \norm )\frac{c\e^{\delta } } {1- c \e^{\delta } }  \eeq
The norms $\norm P^*_n \norm, \norm P_n \norm$ are independent of $n$ and finite since $\dim \caE$ is finite,
and hence, using Theorem \ref{thm:  properties L}(3),  $\norm \triangle T \norm$ vanishes as $ \la \downarrow 0$ and as $\de=\hat{\delta}$.\\

 Remark that $M$ is self-adjoint on $\caB_2(\caE)$ and that $\sp M=\caF$. By Theorem \ref{thm:  properties L}(2), we can hence decompose $L_\ka= \oplus_{\om \in \caF} L_{\ka,\om}$ where  $L_{\ka,\om}$ acts on the $\om$-eigenspace of $M$.  Hence
 \beq \label{eq: decomposed resolvent L}
\big(z-  \e^{-\i \tau ( \lakl M +\i L_\ka )  }\big)^{-1}=  \mathop{\oplus}\limits_{\om \in \caF} \e^{\i \tau \lakl \om}\big(\e^{\i  \tau \lakl  \om }z-  \e^{\tau L_{\ka,\om}} \big)^{-1}
\eeq
Theorem \ref{thm:  properties L}(1), the expression \eqref{eq: decomposed resolvent L} and compactness of the unit circle in $\bbC$ yield  that there is a $\ep>0, C>0$,  such that for  $ \e^{\tau f_{\ka}}-\ep  < \str z \str
 < \e^{\tau f_{\ka}}$,  for $\ka \in D$ and for $\tau$ varying over some compact set,
\beq
 \norm \big(z-  \e^{-\i \tau ( \lakl M +\i L_\ka )  } \big)^{-1}  \norm   \leq C (\str z \str -\e^{\tau f_{\ka}})^{-1}
 \eeq
The existence of an isolated eigenvalue and positivity of the eigenvector now follows from \eqref{eq: smallness perturbation} by standard perturbation theory, see e.g. \cite{katoperturbation}.  Positivity of the eigenvalue follows since by \eqref{def: Z}, $Z^{{\ka,\la}}_t$ is a completely positive map for $\Im \ka =0$.

 Real Analyticity in $\ka$ and $\la$ for $\la \neq 0$ follows from analyticity of $L_{\ka}$ and $\triangle T$, both of which are straightforward consequences of Assumption \ref{ass: laplace}.
 Since $\e^{\i t \lakl M}$ doesnot have a limit as $\la \downarrow 0$, analyticity at $\la=0$ is not immediate.
 However, since, $f_{\ka,\la,\tau}$ is analytic for  $\la \neq 0$ and continuous at $\la=0$, it is analytic. 
 \end{proof}

By Lemma \ref{lem: spectral deformation}, we get for $m$ large enough
 \beq\label{eq: discretized thm}
 \frac{1}{\tau m} \log \compress Z_{\ka,\la}^{\lakl m \tau} \embed   =    f_{\ka,\la,\tau} +  \frac{1}{\tau m}\log{ \left ( P^*_{1} P_{G_{\ka,\la,\tau}} P_{1}  + O(\e^{-m \tau  g_{\ka,\la,\tau}}) \right)}
 \eeq
 where $P_{G_{\ka,\la,\tau}}$ is the projection on $G_{\ka,\la,\tau}$. 
 
Taking $\tau,\tau' \in D_\tau$ such that  $m\tau=m' \tau'$ for some $m,m' \in \bbN$,  we get from \eqref{eq: discretized thm} that $f_{\ka,\la,\tau}=f_{\ka,\la,\tau'}$. Since $f_{\ka,\la,\tau}$ is also continuous in $\tau$, it is constant and we write  $f_{\ka,\la}:= f_{\ka,\la,\tau}$.  Theorem \ref{thm: main} now follows with $f(\ka,\la)= \la^2 f_{\ka,\la}$ by \eqref{eq: rewritten object} .

\subsection{Proof of Theorem \ref{thm: gc}} \label{sec: proof of gc}
Assume that 
\beq
\rho_{\caE} [S]=  \frac{1}{\dim \caE} \Tr[S] 
\eeq
Let $\frU$ be the $W^*$-algebra (Von Neumann- algebra) which is generated by the sets
\beq
\caB(\caE) \otimes 1\quad  \textrm{and} \quad     \{ \e^{\i (a(\psi)+a^*(\psi))},\psi \in  \frh  \}
\eeq
Remark that the expansion \eqref{def: series kappa} shows that for all $t$,
\beq
\e^{\i t H_0} \e^{-\i t H_\la} \in \frU.
\eeq
(See \cite{derezinski1} for details on $W^*$-algebra's).  Extend the notation $\ka(\nu)$ in Theorem \ref{thm: gc} to $\nu \in \bbC$. The maps of automorphisms
\beq
\frU \ni A \mapsto \eta_{s} (A):= \Ga(w_{\ka(\i s)})  A \Ga(w_{\ka(-\i s)}) , \qquad s \in \bbR
\eeq
is a $W^*$-dynamics and $\rho_{\caE} \otimes \rho_\res$ is a $1$-KMS state wrt.\ to this dynamics.  This can be easily checked or read in the literature, see again  \cite{derezinski1}.
Then, the KMS-condition reads that for $A,B \in \frU$, the function  
\beq 
\rho_{\caE} \otimes \rho_\res [ A  \eta_{s} (B)   ] 
\eeq
is analytic in $\{0 \leq \Im s \leq 1 \}$ and satisfies
\beq
 \rho_{\caE} \otimes \rho_\res [   \eta_{s} (A) B   ] =   \rho_{\caE} \otimes \rho_\res [  B   \eta_{s+\i }(A)     ] 
\eeq
Choosing $A=  \e^{-\i t H_0} \e^{\i t H_\la} $ and $B= \e^{-\i t H_\la} \e^{\i t H_0} $, inserting $1=\Theta \Theta$, using Assumption \ref{ass: time-reversal}, the general property
$\rho ( C^*) =\overline{\rho (C)}$ (true for every state $\rho$),  $[\e^{-\i t H_0}, \Ga(w_\ka)]=0$ and the invariance of $\rho_{\caE} \otimes \rho_\res$  under the dynamics  $\e^{-\i t H_0} \cdot \e^{\i t H_0} $, one gets the relation 
\beq
 \rho_{\caE} \otimes \rho_\res  \left[             \eta_{-\i \nu }(U_{-t}) U_{t}  \right]  =    \rho_{\caE} \otimes \rho_\res  \left[      \eta_{-\i 
(1-\nu) }(U_{-t}) U_{t}               \right]
\eeq
for $-1 \leq \nu \leq 0$. This is extended by analyticity to values of $\nu$ such that $\ka(\nu) \in D$. Theorem \ref{thm: gc} follows since by Theorem \ref{thm: main}, $f(\ka,\la)$ is indepenent of $\rho_\caE$.

\section{Proof of some estimates}\label{sec: estimates}

We prove the lemma's that were used in Section \ref{sec: proof of main theorem}. As in Section  \ref{sec: proof of main theorem}, we always assume Assumptions  \ref{ass: laplace} and \ref{ass: reservoir ham} and we take $\ka$ such that $\{  \Re \ka \in D\}$ where $D$ is as in Assumption  \ref{ass: laplace} .

\subsection{The Wick-representation of the dynamics on $\caB(\caE)$} \label{sec: wick representation}

The aim of this section is to introduce a convenient notation to handle the Wick-ordered Dyson expansion, stated in (\ref{def: dysonpaired1}-\ref{def: dysonpaired2}) . The result is equation \eqref{eq: dysondsi}.

Recall the representation of $V$ as a function $V: \caX \to \caB(\caE)$, introduced in \ref{def: function V}.
Denote
 \beq  
  V^{\#}_t(x):= \e^{\i t E} V^{\#}(x) \e^{-\i t E }
 \qquad   t \in \bbR , x \in \caX , \quad  V^{\#}(x)= V(x),(V(x))^*
\eeq
By Assumption \ref{ass: reservoir ham}, both $h$ and $w_\ka$ can be represented as multiplication operators with functions on $\caX$. We will denote these functions by respectively 
$\xi(x)$ and $w_\ka(x)$ (consistent with the use of $\xi$ in Assumption  \ref{ass: reservoir ham}).

 Introduce the space $\caZ=\caX \times \{1,2,3,4\}$ with elements $z=(x,j)$ and measure $\d z = \d x \d j$ ($\d j$ stands for the counting measure on \{1,2,3,4\})  and the maps $Q^{\ka}_{u \in \bbR, z \in \caZ} \in \caB(\caB(\caE)) $,
\beq \label{def: explicitQ}
Q^{\ka}_{u,  z=(x,j) }( S) = \left\{
\begin{array}{ rcclcl}
  \e^{ -\i u \xi(x) }   w_{\frac{\ka}{2}}(x) &V_u(x) &\, S \, &           & \qquad & j=1    \\
  \e^{ \i u \xi(x) }  w_{-\frac{\overline{{\ka}}}{2}}(x)& V^{*}_{u} (x)& \, S\, &    & \qquad &   j=2  \\
   \e^{- \i u \xi(x) }  w_{-\frac{{{\ka}}}{2}}(x)  & &\, S \,&     V_{u}(x)  & \qquad &   j=3    \\
   \e^{\i u  \xi(x) }      w_{\frac{\overline{{\ka}}}{2}}(x) & &\, S \,&   V^*_{u}(x)   & \qquad &     j=4 
\end{array}
\right.
\eeq

 We now introduce the pairing coefficient $C(z,z')$ for $z,z' \in \caZ$; \beq \label{def: pairing coeff}
 C(z=x,j ; z'=x',j'):=  \de(x-x')  \left \{
\begin{array}{cc}
    1   & j=1, \, \left\{ \begin{array}{c}   j'= 2  \\  j'= 4    \end{array} \right.  \,\,\textrm{ or }\, \, j=4, \, \left\{ \begin{array}{c}   j'= 1 \\  j'= 3    \end{array} \right.   \\
   0   &     \textrm{otherwise}    
\end{array}
\right.
 \eeq 
 For $n \in 2 \bbN$, let $\Pair(n)$ denote the set of partitions of
$\{1,\dots,n\}$ in pairs. For such a partition $\pi \in \Pair(n)$, we write \beq (i,i') \rightarrow \pi \,   \Leftrightarrow \, \left\{ \begin{array}{l} (i,i') \textrm{ is one of the pairs in the partition  } \pi \\  i'>i \end{array} \right. \eeq
The following representation for $ \compress Z_t^{\ka,\la} \embed$ is our starting point.
%
%
\begin{eqnarray}
   \compress Z_{\lakl t} ^{\ka,\la} \embed =&   \e^{\i \lakl t M } \,  \mathop{\sum}\limits_{n \in  2\bbN}    \, \,  \mathop{\int}\limits_{0 \leq u_1 \leq  \ldots  \leq u_n \leq t } \d u_1 \ldots  \d u_n     \mathop{\sum}\limits_{\pi \in \Pair(2n)}   &         \label{def: dysonpaired1} \\ [0.7ex]
 &    \la^{-n}     \mathop{\int}\limits_{\caZ^n } \d z_1 \ldots \d z_n     \, \left(\mathop{\prod}\limits_{ (i,i') \rightarrow \pi} C(z_i,z_{i'}) \right)   \,         Q^{\ka}_{\lakl u_n,z_n}  \ldots   Q^{\ka}_{\lakl u_1,z_1}  &   \label{def: dysonpaired2}
\end{eqnarray}
It follows  from the definition \eqref{def: Z}, the Dyson expansion \eqref{def: series kappa} and the Wick theorem. \\

 Let  $[0,t]_{2}$ be  the set of (unordered) couples in  $[0,t]_{}$ and
\beq
\dsig_t:= \left\{ \dsi \subset [0,t]_{2}, \str\dsi \str < \infty \right\} 
\eeq
We remark that there is an idenfification between  $n \in 2 \bbN , 0 \leq u_1 \leq \ldots \leq u_n \leq t, \pi \in \mathrm{Pair}(n)$  and $\dsi \in \dsig_t$ with $\str\dsi\str=n/2$, given by 
\beq
\dsi =\mathop{\cup}\limits_{ (i,i') \rightarrow \pi}  \{ ( u_i,u_{i'}  )  \}
\eeq

   By writing $\d n$ and $\d_n \pi$ for the counting measures on respectively $\bbN$  and $ \mathrm{Pair}(n)$, we define, using the above  idenfification,  
\beq 
\d \dsi := \d n \times \d u_1 \times  \ldots \times \d u_n \times  \d_n \pi,   
\eeq
This definition could be ambiguous when $\str \dsi \str =0$ (hence $\dsi=\emptyset$), which we fix by defining \beq \int_{\dsig_{t} }  \d \dsi \, \mathrm{Ind}(\dsi=\emptyset)=1.\eeq
Thus, we have made $\dsig_t$ into a measure space.  Using the same identification, we define $\caV^{\ka,\la}(\dsi) \in \caB(\caB(\caE))$ to equal the line  \eqref{def: dysonpaired2} \beq \label{def: caV}
\caV^{\ka,\la}(\dsi):=  \la^{-n}     \mathop{\int}\limits_{\caZ^n } \d z_1 \ldots \d z_n     \, \left(\mathop{\prod}\limits_{ (i,i') \rightarrow \pi} C(z_i,z_{i'}) \right)   \,         Q^{\ka}_{\lakl u_n,z_n}  \ldots   Q^{\ka}_{\lakl u_1,z_1} 
\eeq
and we again abbreviate $\caV(\dsi):=\caV^{\ka,\la}(\dsi)$.

We have hence rewritten (\ref{def: dysonpaired1}-\ref{def: dysonpaired2})  as 
\beq  \label{eq: dysondsi}
\compress Z_{\lakl t} ^{\ka,\la} \embed =  \e^{\i \lakl t M } \,   \int_{\dsig_t} \d \dsi    \caV(\dsi) 
\eeq
For convenience, we also define $\basig_t \subset \dsig_t$ as the set of those $\dsi$ with $\str \dsi \str=1$. Hence $\basig_t$ is the set of ordered pairs in $[0,t]$. We will write the elements fo this pair as $\underline{s}(\basi),\overline{s}(\basi)$ with $\underline{s}(\basi) <\overline{s}(\basi)$. 

We stress that up to this point, nothing happened; we just cooked up a fancy notation, culminating in equation \eqref{eq: dysondsi},  for the Wick-ordered Dyson expansion!

\subsection{Proof of Lemma \ref{lem: bound caW}. }
 
 Statement (1) of Lemma \ref{lem: bound caW} is an obvious consequence of the definition \eqref{def: caW},  we concentrate on Statement (2).
We first establish the  crude a-priori bound \eqref{apriori bound}.

Let $(u_{a})$ be a basis in $\caE$ and define 

\beq \label{def: modified correlation} q_\ka(t) := \sum_{a,a',a'',a''''} \left\str \int_{\caX} \d x  \langle u_a, (V(x))^* u_{a'} \rangle    \langle u_{a''}, V(x) u_{a'''} \rangle \,  \e^{\i t \xi(x) }   w_{\ka}(x) \right\str     \eeq

Since $\caE$ is finite-dimensional, the function $ q_\ka(t) $  is dominated by a multiple of $ p_\ka(t) $ (as defined in \ref{def: correlation}) and vice versa. 
Using the explicit expression  \eqref{def: explicitQ}, \eqref{def: pairing coeff} and \eqref{def: caV}, one gets
 \beq
 \norm \caV(\basi) \norm \leq       \lakl
     ( q_{\Re \ka} +q_{\Im \ka})  (\frac{\overline{s}(\basi)-\underline{s}(\basi)}{\la^2} )+  \lakl( q_{\Re \ka}+q_{-\Im \ka})  (-\frac{\overline{s}(\basi)-\underline{s}(\basi)}{\la^2} ) =:  \lakl d_{\ka} (\frac{\overline{s}(\basi)-\underline{s}(\basi)}{\la^2} )  \eeq

One easily checks
 \beq
 \norm \caV(\dsi) \norm \leq         \mathop{\prod}\limits_{\basig_{t } \ni \basi \subset \dsi}  \lakl  d_{\ka} (\frac{\overline{s}(\basi)-\underline{s}(\basi)}{\la^2} ) =: G(\dsi) \eeq
(For example, one can represent $V^{{\#}}_u(x)= \sum_{a,a'}  \str a\rangle \langle a, V^{{\#}}_u(x) a' \rangle \langle a' \str $ in \eqref{def: explicitQ} and then factorize \eqref{def: caV}).
By a change of integration variables, and summing over all values of $\str \dsi \str$, we arrive at the a-priori bound
\beq\label{apriori bound}
\int_{ \dsig_t} \d \dsi   \norm \caV (\dsi) \norm   \leq     \e^{ t   \norm d_\ka \norm_1  }
\eeq
 with 
 $\norm d_\ka \norm_1= \int_{\bbR^+}d_\ka(t)\d t$, which is finite since $\norm p_\ka \norm_1$ is finite.

Let 
\beq J_{s,\tau_{\footnotesize{\pm }  }}(\dsi) := \mathrm{Ind} [ \exists \basi \in \basig_t , \, \basi \subset \dsi,\,  \underline{s}(\basi)\leq s \leq \overline{s}(\basi), \,  \overline{s}(\basi)- \underline{s}(\basi) \gtrless  \tau    ]  \eeq
One can easily convince oneself that ($\vee$ stands for the maximum)
\beq \label{relation WV}
\caW_n = \e^{\i n \lakl \tau M} \int_{\dsig_{n \tau}}  \d \dsi  \left(  \prod_{j =1}^{n-1} J_{j \tau,\tau_{-}}  \vee J_{j \tau,\tau_{+}}  \right) (\dsi) \caV(\dsi)
\eeq
In words, each $\dsi$ contributing to $\caW_n$ contains for each $j =1,\ldots,n-1$  a $\basi$ which 'crosses' $j \tau$.  Or, the insertion of $1-\embed \compress$ forces a pairing to occur.

\begin{lemma}\label{lem: estimates on caV} Assume Assumption \ref{ass: correlation}.
There are $c_{\pm}:=c_{\pm}(\ka,\la, \tau)$ vanishing as $\la \downarrow 0$ and continuous in the three parameters, such that for
 $\caJ_-, \caJ_+$ disjoint subsets of $\bbN_0$,
 \beq
  \mathop{\int}\limits_{\dsig_t} \d \dsi  \left( \mathop{\prod}\limits_{j_{\pm} \in \caJ_{\pm}} J_{j_{\pm} \tau,\tau_{\pm}} \right)   (\dsi)    \norm \caV(\dsi) \norm    \leq  (c_{+})^{\str \caJ_+ \str} (c_{-})^{\str \caJ_- \str}  \int_{\dsig_t} \d \dsi \,  G(\dsi)    \label{eq: estimate on caV3}
  \eeq
\end{lemma}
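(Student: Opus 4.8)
The plan is to reduce the estimate to the a-priori bound \eqref{apriori bound} by showing that each factor $J_{j_{\pm}\tau,\tau_{\pm}}$ forces a pairing $\basi\subset\dsi$ crossing the time $j_{\pm}\tau$ with a prescribed length constraint, and that summing/integrating over such a ``long'' or ``short'' pairing produces a small constant $c_{\pm}$ because of Assumption \ref{ass: correlation}. Concretely, I would fix a configuration $\dsi\in\dsig_t$ and, for each $j\in\caJ_+\cup\caJ_-$, single out one pair $\basi_j\in\basig_t$ with $\basi_j\subset\dsi$ and $\underline s(\basi_j)\le j\tau\le\overline s(\basi_j)$ (existence guaranteed by the indicator; if several exist, pick the one of smallest lower endpoint, say, to make the choice canonical). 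For $j\in\caJ_+$ this pair additionally satisfies $\overline s(\basi_j)-\underline s(\basi_j)>\tau$; for $j\in\caJ_-$ it satisfies $\le\tau$. Since the $j\tau$ are spaced by $\tau$ and a short pair ($\le\tau$) can straddle at most one point $j\tau$, the short pairs $\{\basi_j:j\in\caJ_-\}$ are pairwise distinct; the long pairs $\{\basi_j:j\in\caJ_+\}$ need not be distinct, but each long pair can be counted by at most one $j\in\caJ_+$ if we also insist it be the \emph{leftmost} crossed grid point, so after this bookkeeping all the $\basi_j$, $j\in\caJ_+\cup\caJ_-$, are distinct elements of $\dsi$.

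Next I would factor the integrand. Since $\norm\caV(\dsi)\norm\le G(\dsi)=\prod_{\basi\subset\dsi}\lakl d_\ka\big(\tfrac{\overline s(\basi)-\underline s(\basi)}{\la^2}\big)$ is a product over the pairs of $\dsi$, I split $G(\dsi)$ into the factors coming from the distinguished pairs $\basi_j$ and the rest. Integrating over the location and length of a single distinguished long pair $\basi_j$ that crosses the fixed point $j\tau$ gives a factor bounded by
\beq
c_+ \ :=\ \sup_{s_0}\int_{\overline s-\underline s>\tau,\ \underline s\le s_0\le\overline s} \lakl d_\ka\Big(\tfrac{\overline s-\underline s}{\la^2}\Big)\,\d\underline s\,\d\overline s \ \le\ \int_{\tau}^\infty \lakl\,\ell\, d_\ka\big(\tfrac{\ell}{\la^2}\big)\,\d\ell \ =\ \int_{\tau/\la^2}^\infty \la^2\, r\, d_\ka(r)\,\d r,
\eeq
which tends to $0$ as $\la\downarrow0$ by dominated convergence (using $\int r\,d_\ka(r)\d r<\infty$, itself a consequence of the exponential decay in Assumption \ref{ass: correlation}), and is clearly continuous in $(\ka,\la,\tau)$; similarly for a short pair,
\beq
c_- \ :=\ \sup_{s_0}\int_{\overline s-\underline s\le\tau,\ \underline s\le s_0\le\overline s} \lakl d_\ka\Big(\tfrac{\overline s-\underline s}{\la^2}\Big)\,\d\underline s\,\d\overline s \ \le\ \int_0^{\tau}\lakl\,\ell\, d_\ka\big(\tfrac{\ell}{\la^2}\big)\,\d\ell\ =\ \int_0^{\tau/\la^2}\la^2\, r\, d_\ka(r)\,\d r,
\eeq
which is $\le\int_0^\infty\la^2 r\,d_\ka(r)\,\d r\to0$ as $\la\downarrow0$ as well. (One may enlarge $c_\pm$ slightly if the grid-point bookkeeping above forces overcounting by a fixed finite factor; this does not affect the $\la\downarrow0$ limit.)

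Finally I would assemble the pieces: performing the integrals over the $|\caJ_+|+|\caJ_-|$ distinguished pairs first, each contributes its factor $c_\pm$, and what remains is the integral over the complementary configuration, which is bounded above by $\int_{\dsig_t}\d\dsi\,G(\dsi)$ (extending back to all of $\dsig_t$ and dropping the remaining indicators only increases the integral). This yields exactly
\beq
\mathop{\int}\limits_{\dsig_t}\d\dsi\Big(\mathop{\prod}\limits_{j_\pm\in\caJ_\pm}J_{j_\pm\tau,\tau_\pm}\Big)(\dsi)\,\norm\caV(\dsi)\norm\ \le\ (c_+)^{|\caJ_+|}(c_-)^{|\caJ_-|}\int_{\dsig_t}\d\dsi\,G(\dsi),
\eeq
as claimed. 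The main obstacle I anticipate is the combinatorial bookkeeping in Step 1: making the assignment $j\mapsto\basi_j$ genuinely injective (so that the Fubini factorization in Step 3 is legitimate rather than double-counting), and handling the case where a single long pair is crossed by many grid points — this is where the choice ``leftmost crossed grid point'' or an equivalent convention must be pinned down carefully. The analytic estimates in Steps 2–3 are routine once the measure-theoretic structure of $\dsig_t$ from Section \ref{sec: wick representation} and the exponential bound of Assumption \ref{ass: correlation} are invoked.
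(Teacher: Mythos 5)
Your Step 1 bookkeeping contains a gap that you yourself flag, and it is not repairable by a convention such as ``leftmost crossed grid point.'' The problem is that the map you want to use as a factorization device must be an injection $\caJ_+\to\{\text{pairs in }\dsi\}$, but the leftmost-grid-point rule goes in the opposite direction: if a single long pair $\basi$ is the \emph{only} pair crossing the grid points $j_1\tau<j_2\tau<\dots<j_k\tau$ in $\caJ_+$ (which is perfectly possible, and then $\prod_j J_{j\tau,\tau_+}(\dsi)=1$), your rule assigns $\basi$ to $j_1$ but leaves $j_2,\dots,j_k$ with no pair to extract. So the claim ``all the $\basi_j$, $j\in\caJ_+\cup\caJ_-$, are distinct elements of $\dsi$'' is false in general, and the subsequent Fubini factorization of $G(\dsi)$ into $|\caJ_+|+|\caJ_-|$ distinguished single-pair integrals plus a remainder is not legitimate. (Your treatment of $\caJ_-$ is fine: a pair of span $\le\tau$ can straddle at most one grid point, and short pairs are disjoint from long ones, so injectivity holds there.)

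The paper handles $\caJ_+$ by a genuinely different device: instead of one distinguished pair per $j\in\caJ_+$, it sums over all partitions $\{\caG_i\}$ of $\caJ_+$ into order-respecting blocks and, for each block, extracts \emph{one} pair that must cross the whole block. The span of that single pair is then at least $(\max\caG_i-\min\caG_i)\tau$ in macroscopic time, i.e.\ $(\max\caG_i-\min\caG_i)\tau/\la^2$ in microscopic time, and it is exactly here that the \emph{exponential} decay of Assumption \ref{ass: correlation} is invoked to beat the combinatorial entropy of the partition sum and produce a factor $(c_+)^{|\caJ_+|}$. Notice, as a symptom of the missing ingredient, that your argument only uses $\int_0^\infty r\,d_\ka(r)\,\d r<\infty$, which any integrable tail heavier than $r^{-2-\epsilon}$ would give; the lemma as stated, and in particular the ability to assign a factor $c_+$ to \emph{each} element of $\caJ_+$ even when many of them are crossed by one pair, genuinely requires the exponential bound. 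Your estimates on $c_-$ and on a single isolated long pair are fine and agree with the paper's; the gap is precisely the block/partition mechanism needed to pass from ``one pair per block'' to the exponent $|\caJ_+|$.
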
 
\begin{proof}

 Denote by $\{ \caG_i\}_i$ a partition of $\caJ_+$  in subsets $\caG_i$ satisfying $\max \caG_i < \min \caG_{i+1}$ for all $i$.  Write $\sum_{ \{\caG_{i} \}_{i} }$ for the sum over all such partitions.  Then, 
 \begin{eqnarray} 
\textrm{LHS of \eqref{eq: estimate on caV3}} &\leq&  \left( \sum_{ \{\caG_{i} \}_{i} } \prod_{i}  \int_{\basig_t} \d \basi \left( \prod_{j \in \caG_i}   J_{j \tau,\tau_{+}} (\basi) \right) \, G(\basi)  \right)  \label{proof: caV1} \\
 &\times&
  \left( \prod_{ j_{- } \in \caJ_{-}} \int_{\basig_t} \d \basi J_{j \tau,\tau_{-}}(\basi) G(\basi)  \right)  \,  \times \,  \left( \int_{\dsig_t} \d \dsi \, G(\dsi)  \right) \label{proof: caV2}
\end{eqnarray}
 %
 %
%
%
Since
\begin{eqnarray}
\int_{\basig_t} \d \basi J_{j \tau,\tau_{-}}(\basi) G(\basi)  &\leq&  \lakl  \int_{\underline{s} \leq s\leq \overline{s}} \d \overline{s} \d\underline{s}  \,d_{\ka}(\lakl (\overline{s}-\underline{s} )) \\ 
& \leq &   \la^2 \int_{\bbR^+}   \d u \,  u \, d_{\ka}(u)   =:c_{-},
  \end{eqnarray}
 hence the first factor in \eqref{proof: caV2} is bounded by $(c_{-})^{\str  \caJ_- \str }$.

By the argument following \eqref{def: modified correlation}, Assumption  \ref{ass: correlation} implies that there are $C_\ka,\al_\ka >0$ such that $d_\ka(t) <C_\ka \e^{-\al_\ka  \str t \str }$ (Obviously, $C_\ka;\al_\ka$ can  be chosen constant if $\ka$ varies in a bounded set).
One can bound
\beq
 \int_{\basig_t}  \d \basi \left(  \prod_{j \in \caG_i}   J_{j \tau,\tau_{+}}(\basi) \right)G(\basi) \leq    \frac{C_\ka}{\al_\ka}  \e^{- \al_\ka \lakl  (1/2) \str \max\caG_{i}-\min \caG_{i} \str \tau   }.
\eeq
Using  $\sum_{i} \str \max\caG_{i}-\min \caG_{i} \str \geq \str \caJ_{+}\str$ , we arrive at  the upper  bound for the RHS of \eqref{proof: caV1}
\beq
\sum_{ \{\caG_{i} \}_{i} } \prod_{i} \int_{\basig_t}  \d \basi   \left( \prod_{j \in \caG_i}  J_{j \tau,\tau_{+}}(\basi) \right)G(\basi)\leq   \e^{\frac{C_\ka}{\al_\ka}  \str \caJ_{+ }  \str  \tau }   \e^{- \al_\ka \lakl  (1/2)  \str \caJ_{+}\str \tau   } =: (c_{+})^{\str \caJ_{+} \str  }
\eeq

\end{proof} 

To conclude the proof of Lemma \ref{lem: bound caW}, we use expression \eqref{relation WV}, replacing $\vee \rightarrow +$,
\beq
\norm \caW_{n }  \norm  \leq   \int_{\dsig_t} \d \dsi    \left(  \prod_{j =1}^{n-1}  (J_{j \tau ,\tau_{-}}+  J_{j \tau ,\tau_{+}})   \right) (\dsi)  \norm \caV(\dsi)\norm   \leq   (c_{-}+c_{+})^{n-1}   (\e^{  \tau \norm d_\ka \norm_1})^n\label{lastproofline2} 
\eeq
To get the last inequality, we represented the product in $\prod_{j =1}^{n-1}  (J_{j \tau ,\tau_{-}}+  J_{j \tau ,\tau_{+}}) $ as a sum over partitions of $\{ 1,\ldots,n-1\}$ in $2$ sets $\caJ_{-} $ and  $\caJ_{+} $, we applied Lemma \ref{lem: estimates on caV}  and we resummed the sum over partitions by the binomial formula. Finally, the bound \eqref{apriori bound} with $t=n \tau$ was used.

\subsection{Proof of Theorem \ref{thm:  properties L}}

These statements are contained in the literature. Statement (1) is a consequence of the Perron-Frobenius theorem for completely positive maps, stated in \cite{evanshoegh}  and valid in our context under Assumption \ref{ass: fermi golden rule} (This is extensively  discussed in \cite{deroeckmaesfluct}). 
 Statement (2) can be immediately checked from the explicit expressions in Section \ref{def: lind}. 
For $\ka=0$, Statement (3) is a result of the usual weak-coupling theory, see e.g.\ \cite{derezinskideroeck2}. For $\ka \neq 0$, it is a straightforward generalization of these theorems. One can easily follow the arguments in \cite{derezinskideroeck2} and adapt the statements.


\section*{Acknowledgments}
The author has benefited from good discussions with H.-T. Yau, C.
Maes, J. Bricmont, J. Derezi\'{n}ski, J. Fr\"{o}hlich and C-A. Pillet.  Constructive critique on a previous version of the manuscript was raised by W. Abou-Salem, G.M. Graf and D. Spehner. The
financial support of the FWO-Flanders is greatly acknowledged.

\bibliographystyle{plain}
\bibliography{mylibrary07specialforness}

\end{document}